\theoremstyle{plain}
\newtheorem{thm}{\protect\theoremname}
\providecommand{\theoremname}{Theorem}
\begin{document}
\global\long\global\long\global\long\def\bra#1{\mbox{\ensuremath{\langle#1|}}}
\global\long\global\long\global\long\def\ket#1{\mbox{\ensuremath{|#1\rangle}}}
\global\long\global\long\global\long\def\bk#1#2{\mbox{\ensuremath{\ensuremath{\langle#1|#2\rangle}}}}
\global\long\global\long\global\long\def\kb#1#2{\mbox{\ensuremath{\ensuremath{\ensuremath{|#1\rangle\!\langle#2|}}}}}

\theoremstyle{plain}
\newtheorem*{remark}{Remark}

\def\cblue{\color{blue}}
\def\cred{\color{red}}
\def\cmag{\color{magenta}}
\def\cblack{\color{black}}
\def\cmh{\cmag}

\title{Creation of superposition of unknown quantum states}

\author{Micha\l{} Oszmaniec}

\email{michal.oszmaniec@icfo.es}

\affiliation{Center for Theoretical Physics, Polish Academy of Sciences, Al. Lotników
32/46, 02-668 Warszawa}

\affiliation{ICFO - Institut de Ciencies Fotoniques, Mediterranean Technology Park,
08860 Castelldefels (Barcelona), Spain }

\author{Andrzej Grudka}

\affiliation{Faculty of Physics, Adam Mickiewicz University, 61-614 Poznan, Poland}

\author{Micha\l{} Horodecki}

%
%

\affiliation{Faculty of Mathematics, Physics and Informatics, 
Institute of Theoretical Physics and Astrophysics, University of
Gda\'nsk, 80-952 Gdañsk, Poland}

\author{Antoni W\'{o}jcik}

\affiliation{Faculty of Physics, Adam Mickiewicz University, 61-614 Poznan, Poland}

\begin{abstract}

The superposition principle is one of the landmarks of quantum mechanics. The importance of quantum superpositions provokes questions about the limitations that quantum mechanics itself imposes on the possibility of their generation. In this work we systematically study the problem of creation of superpositions of unknown quantum states. First, we prove a no-go theorem that forbids the existence
of a universal probabilistic quantum protocol producing a superposition of two
unknown quantum states. Secondly, we provide an explicit probabilistic
protocol generating a superposition of two unknown states, each having
a fixed overlap with the known referential pure state.The protocol can be applied to generate coherent superposition
of results of independent runs of subroutines in a quantum computer. Moreover, in the context
of quantum optics it can be used to efficiently  generate highly nonclassical states or non-gaussian states.
\end{abstract}

\keywords{quantum information, quantum superpositions, foundations of quantum
mechanics}

\maketitle


The existence of superpositions of pure quantum states is one of the
most intriguing consequences of the postulates of quantum mechanics.
Quantum superpositions are crucial for the path-integral formulation
of quantum mechanics \cite{Feynman2005} and are responsible for
numerous nonclassical phenomena that are considered to be the key
features of quantum theory \cite{Schroedinger1935}. The prominent
examples are: quantum interference \cite{Anderson1998,Zeilinger1999,Wineland2013}
and quantum entanglement \cite{Schroedinger1935a}. Coherent addition of wavefunctions is also responsible for quantum coherence, a feature of quantum states that recently received a lot of attention  \cite{Baumgratz2014,Streltsov2015,Winter2015}.  Quantum superpositions
 are not only important from the foundational point
of view but also a feature of quantum mechanics that underpins the
existence of ultra-fast quantum algorithms (such as Shor factoring
algorithm \cite{Shor1997} or Grover search algorithm \cite{Grover1996}),
quantum cryptography \cite{Gisin2002} and efficient quantum metrology
\cite{Giovannetti2011}. 

The importance of quantum superpositions provokes questions about
the restrictions that quantum mechanics itself imposes on the possibility
of their generation. Studies of the limitations of the possible operations
allowed by quantum mechanics have a long tradition are important both
from the fundamental perspective as well as for the applications in
quantum information theory. On one hand quantum mechanics offers a
number of protocols that either outperform all existing classical
counterparts or even allow to perform tasks that are impossible in
the classical theory (such as quantum teleportation \cite{Bennett1993}).
On the other hand a number of no-go theorems \cite{Wootters1982,Dieks1982,Barnum1996,Pati2000,Pati2002,Piani2008}
restrict a class of protocols that are possible to realise within
quantum mechanics. Finally, such no-go theorems can be themselves
useful for practical purposes. For instance a no-clonning theorem
can be used to certify the security of quantum cryptographic protocols
\cite{Gisin2002}. 

In this paper, we consider the scenario in which we are given two unknown pure quantum states and our task
is to create, using the most general operations allowed by quantum
mechanics, their superposition with some complex weights. 
Essentially the same question was posed in a parallel work of Alvarez-Rodriguez et al. \cite{Alvarez-Rodriguez2014}:
namely the authors asked about the existence of {\it quantum adder} - a machine, that would superpose two registers with the plus sign.

Here, we first prove a no-go theorem, showing that it is impossible to create superposition of two unknown states.
We discuss the relation of our theorem with the no-go results of \cite{Alvarez-Rodriguez2014}.
Subsequently, we provide a protocol that probabilistically creates superposition of two states
having fixed nonzero overlaps with some referential state.  We show that, by using appropriate encoding, the protocol can be used to generate superpositions of unknown vectors from the subspace perpendicular to the referential state, thus allowing for generation of coherent superpositions of the results of quantum subroutines of a given quantum algorithm. This actually shows how to circumvent 
our no-go theorem to some extent. 
We also discuss optical implementation of the protocol, with the referential state being the vacuum state. 
Finally, we discuss the differences between our results, and analogous results concerning cloning.

\paragraph{Introduction}
Before we proceed we need to carefully analyse the concept of quantum superpositions.
Recall first that the global phase of a wavefunction is not a physically
accessible quantity. This redundancy can be removed when one interprets
pure states as one dimensional orthogonal projectors acting on the
relevant Hilbert space. In what follows the pure state corresponding
to a normalized vector $\ket{\psi}$ will be denoted by %
\footnote{We decided to use this notation to avoid ambiguity. The relation to
the standard notation used in quantum mechanics is simply given by
$\mathbb{P}_{\psi}=\kb{\psi}{\psi}$.
} $\mathbb{P}_{\psi}$. Normalized vectors that
rise to the same pure state $\mathbb{P}_{\psi}$ are called vector
representatives of $\mathbb{P}_{\psi}$. They are defined up to a
global phase i.e. $\mathbb{P}_{\psi}=\mathbb{P}_{\psi'}$ if and only
if $\ket{\psi'}=\mathrm{exp}\left(i\theta\right)\ket{\psi}$, for
some phase $\theta$. Let now $\alpha,\beta$ be complex numbers satisfying $|\alpha|^2+|\beta|^2=1$ and let  $\mathbb{P}_{\psi}$, $\mathbb{P}_{\phi}$
be two pure states. By  $\mathbb{P}_{\alpha,\beta}\left(\ket{\psi},\ket{\phi}\right)$ we 
denote the projector onto the superposition of $\ket{\psi}$ and $\ket{\phi}$
\begin{equation}
\mbox{\ensuremath{\mathbb{P}_{\alpha,\beta}\left(\ket{\psi},\ket{\phi}\right)}}=\mathbb{P}_{\Psi},\,\ket{\Psi}=\mathcal{N}^{-1}\cdot\left(\alpha\ket{\psi}+\beta\ket{\phi}\right)\,,\label{eq:superposition def}
\end{equation}
where $\mathcal{N}=\sqrt{1+\mathrm{2}\cdot\mathrm{Re}\left(\bar{\alpha}\beta\bk{\psi}{\phi}\right)}$
is a normalization factor. The crucial observation is that $\mathbb{P}_{\alpha,\beta}\left(\ket{\psi},\ket{\phi}\right)$
is not a well-defined function of the states $\mathbb{P}_{\psi}$
and $\mathbb{P}_{\phi}$. This is because $\mathbb{P}_{\alpha,\beta}\left(\ket{\psi},\ket{\phi}\right)$
depends on vector representatives $\ket{\psi},\,\ket{\phi}$ , whose
phases can be gauged independently. Consequently, we have the infinite
family of pure states
\begin{equation}
\mathbb{P}_{\alpha,\beta}\left(\ket{\psi},\mathrm{exp}\left(i\theta\right)\ket{\phi}\right)\,,\,\theta\in\left[0,2\pi\right)\,,\label{eq:infinite family}
\end{equation}
which can be legitimately called superpositions of $\mathbb{P}_{\psi}$
and $\mathbb{P}_{\phi}$. This phenomenon appears already in the simplest
example of a qbit. For $\mathbb{P}_{\psi}=\kb 00,\ \mathbb{P}_{\phi}=\kb 11$
and $\alpha=\beta=\frac{1}{\sqrt{2}}$  the family given by
(\ref{eq:infinite family}) can be identified with the equator on the
Bloch ball. 
The analogous analysis was conducted in \cite{Alvarez-Rodriguez2014} and it was argued there that the ambiguity of the relative phase forbids the existence of the universal quantum adding machine. In our approach we propose to relax the definition of superposing,
so that it is not excluded from the very definition.  However, we will still prove a no-go theorem. 

We now settle the notation that we will used throughout the article.
By $\mathrm{Herm}\left(\mathcal{H}\right)$, and $\mathcal{D}\left(\mathcal{H}\right)$
we denote respectively sets of hermitian operators and the set of
density matrices on Hilbert space $\mathcal{H}$. By $\mathcal{CP}\left(\mathcal{H},\mathcal{K}\right)$
we denote the set of completely positive (CP) maps $\Lambda:\mathrm{Herm}\left(\mathcal{H}\right)\rightarrow\mathrm{Herm}\left(\mathcal{K}\right)$ ($\mathcal{K}$ is some arbitrary Hilbert space). 

Let us now formalise our scenario.  We
assume that we have access to two identical quantum registers (to
each of them we associate a Hilbert space $\mathcal{H}$) and we know
that the input state is a product of unknown pure states $\mathbb{P}_{\psi}\otimes\mathbb{P}_{\phi}$.
Our aim is to generate from this input the superposition $\mathbb{P}_{\alpha,\beta}\left(\ket{\psi},\ket{\phi}\right)$
 by the most general operations
allowed by quantum mechanics . By such operations we understand the
application of a quantum channel between $\mathcal{H}^{\otimes2}$
and $\mathcal{H}$ , followed by the postselection conditioned on
the result of some generalized measurement \cite{Nielsen2010}. This class of operations has a convenient
mathematical characterization. It consists of CP maps $\Lambda\in\mathcal{CP}\left(\mathcal{H}^{\otimes2},\mathcal{H}\right)$
that do not increase the trace i.e. $\mathrm{tr}\left[\Lambda\left(\rho\right)\right]\leq\mathrm{tr}\left(\rho\right)$
for all $\rho\in\mathcal{D}\left(\mathcal{H}^{\otimes2}\right)$.
For a given state $\rho$ the number $\mathrm{tr}\left[\Lambda\left(\rho\right)\right]$
is the probability that the operation $\Lambda$ took place. If the
operation takes place, the state $\rho$ undergoes the transformation
$\rho\rightarrow\frac{\Lambda\left(\rho\right)}{\mathrm{tr}\left[\Lambda\left(\rho\right)\right]}\,.$

\paragraph{No-go theorem}
We prove the no-go result in the strongest possible
form. First, we impose the minimal assumptions on the generated superpositions,
assuming only that vectors $\ket{\psi},\,\ket{\phi}$ are vector representatives
depending on the input states (in other words we are not interested
in the relative phase $\theta$ of the superposition appearing in
(\ref{eq:infinite family})). Secondly, we allow the probabilistic
protocols, i.e. the superposition may be created with some probability.
\begin{thm}
\label{nogo theorem}Let $\alpha,\beta$ be nonzero complex numbers
satisfying $\left|\alpha\right|^{2}+\left|\beta\right|^{2}=1$ and
let $\mathrm{dim}\mathcal{H}\geq2$ . There exist no nonzero completely
positive map $\Lambda\in\mathcal{CP}\left(\mathcal{H}^{\otimes2},\mathcal{H}\right)$
such that for all pure states $\mathbb{P}_{1},\mathbb{P}_{2}$
\begin{equation}
\Lambda\left(\mathbb{P}_{1}\otimes\mathbb{P}_{2}\right)\propto\kb{\Psi}{\Psi}\,,\label{eq:sup condition}
\end{equation}
where
\begin{equation}
\ket{\Psi}=\alpha\ket{\psi}+\beta\ket{\phi}\label{eq:superp vector}
\end{equation}
and $\kb{\psi}{\psi}=\mathbb{P}_{1}$, $\kb{\phi}{\phi}=\mathbb{P}_{2}$
and the representants $\ket{\psi},\ket{\phi}$ may in general depend
on both $\mathbb{P}_{1}$ and $\mathbb{P}_{2}$.\end{thm}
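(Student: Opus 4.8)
The plan is to work with a Kraus decomposition $\Lambda(\rho)=\sum_k A_k\rho A_k^\dagger$ with $A_k\colon\mathcal H^{\otimes2}\to\mathcal H$, and to exploit the fact that on every product input the output must be rank one. Since $\dim\mathcal H\ge2$, I fix a two-dimensional subspace with orthonormal basis $\ket 0,\ket 1$ and restrict attention to inputs $\ket\psi,\ket\phi$ lying in it. Writing $\ket{v_k}=A_k(\ket\psi\otimes\ket\phi)$, condition (\ref{eq:sup condition}) reads $\sum_k\kb{v_k}{v_k}\propto\kb\Psi\Psi$; as the left-hand side is positive semidefinite, its range sits inside $\mathrm{span}\ket\Psi$, so \emph{each} $\ket{v_k}$ is parallel to $\ket\Psi$. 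The license to let the representatives depend on $\mathbb P_1,\mathbb P_2$ translates exactly into the statement that $\ket\Psi=\alpha\ket\psi+\beta e^{i\theta}\ket\phi$ for a single relative phase $\theta=\theta(\psi,\phi)$ that I may choose freely per input pair; moreover $\ket\Psi$, and hence every $\ket{v_k}$, lies back inside the two-dimensional subspace.

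The heart of the argument is to eliminate this input-dependent phase. Let $\omega(\ket v,\ket w)=v_0w_1-v_1w_0$ be the determinant form on the qubit, so that $\ket v\parallel\ket w$ iff $\omega(\ket v,\ket w)=0$. Parallelism of $\ket{v_k}$ and $\ket\Psi$ becomes
\[
\alpha\,\omega(\ket{v_k},\ket\psi)+\beta e^{i\theta}\,\omega(\ket{v_k},\ket\phi)=0 .
\]
Setting $P_k(\psi,\phi)=\omega(\ket{v_k},\ket\psi)$ and $Q_k(\psi,\phi)=\omega(\ket{v_k},\ket\phi)$, these are holomorphic polynomials in the components of $\ket\psi,\ket\phi$, homogeneous of bidegree $(2,1)$ and $(1,2)$ respectively, because $\ket{v_k}=A_k(\ket\psi\otimes\ket\phi)$ is bilinear. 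Taking moduli kills the unknown phase and delivers the phase-free identity $|\alpha|\,|P_k|=|\beta|\,|Q_k|$, valid for all unit $\ket\psi,\ket\phi$ and for each $k$ separately. This is precisely the step that must survive the representatives being input-dependent: no matching over all phases is needed, only elimination of the one phase actually chosen.

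The main obstacle, and the crux of the no-go, is a rigidity statement forced by the degree mismatch. Fixing $\ket\phi$, I regard $p(\psi)=P_k(\psi,\phi)$ and $q(\psi)=Q_k(\psi,\phi)$ as homogeneous polynomials in $\ket\psi\in\mathbb C^2$ of degrees $2$ and $1$, with $|\alpha|\,|p|=|\beta|\,|q|$ on the unit sphere. I claim this forces $p\equiv q\equiv0$: if $q\not\equiv0$, then (both weights being nonzero) $p$ and $q$ have identical zero sets on $\mathbb{CP}^1$; but a degree-$1$ form vanishes at a single point, so the degree-$2$ form $p$ must be proportional to $q^2$, whence $|q|$ would be constant on the sphere — impossible for a nonzero linear form. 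Hence $q\equiv0$, and then $p\equiv0$. Therefore $P_k\equiv Q_k\equiv0$, i.e. $\ket{v_k}$ is simultaneously parallel to $\ket\psi$ and to $\ket\phi$; for independent $\ket\psi,\ket\phi$ inside the subspace this forces $A_k(\ket\psi\otimes\ket\phi)=0$.

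Finally I would remove the restriction to one subspace. The vanishing just obtained holds on every two-dimensional subspace, and any two vectors of $\mathcal H$ lie in a common one, so $A_k(\ket\psi\otimes\ket\phi)=0$ for all $\ket\psi,\ket\phi\in\mathcal H$; since product vectors span $\mathcal H^{\otimes2}$, this gives $A_k=0$ for every $k$ and hence $\Lambda=0$, contradicting nonvanishing of $\Lambda$. I expect the two delicate points to be (i) converting the modulus identity, which a priori lives only on the non-complex unit spheres, into a genuine constraint on holomorphic polynomials — handled by homogeneity together with the $\mathbb{CP}^1$ zero-counting above — and (ii) confirming that the nonvanishing of \emph{both} $\alpha$ and $\beta$ is exactly what drives the contradiction, since if either weight were zero the relevant polynomial would be unconstrained and the task would reduce to a partial trace onto one register, which is trivially realizable.
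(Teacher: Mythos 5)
Your proof is correct, and while it shares the paper's opening skeleton, its core is genuinely different. Like the paper, you use the Kraus decomposition together with the extreme-ray (rank-one) property of $\kb{\Psi}{\Psi}$ to reduce to a single Kraus operator, and you restrict to a two-dimensional subspace; the paper does exactly this before passing to the Supplemental Material. From there the two arguments diverge. The paper writes the single Kraus operator $V$ as an explicit $2\times 4$ matrix, first feeds in diagonal inputs $\ket{\psi}=\ket{\phi}$ (so that $V\ket{\psi}\ket{\psi}\propto\ket{\psi}$) to force the matrix into a two-parameter normal form $V\ket{\psi}\ket{\phi}=\bk{\chi_{1}}{\phi}\ket{\psi}+\bk{\chi_{2}}{\psi}\ket{\phi}$, and then kills $\ket{\chi_{1}},\ket{\chi_{2}}$ by choosing $\ket{\psi}\perp\ket{\chi_{2}}$ and varying $\ket{\phi}$, all the while carrying the unknown functions $\mathcal{C}$ and the unit-modulus phase $\mathcal{Z}$. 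You instead encode parallelism through the determinant form $\omega$, which turns the constraint into the vanishing of $\alpha P_k+\beta e^{i\theta}Q_k$ for holomorphic forms of bidegree $(2,1)$ and $(1,2)$; taking moduli eliminates the adversarial input-dependent phase in one stroke, and the degree mismatch then does the work: identical zero sets on $\mathbb{CP}^{1}$ force $p\propto q^{2}$, hence $|q|$ constant on the sphere, which is impossible for a nonzero linear form. Your route buys a cleaner treatment of the representative-dependence (the paper must track $\mathcal{Z}$ throughout) and avoids any matrix computation, at the cost of invoking factorization of binary forms and zero counting on $\mathbb{CP}^{1}$; the paper's route is more elementary but relies on cleverly chosen special inputs. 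Two points you gloss over are harmless but worth stating: the degenerate inputs $\mathbb{P}_{1}=\mathbb{P}_{2}$ with $\ket{\Psi}=0$ are handled because then $\Lambda(\mathbb{P}_1\otimes\mathbb{P}_2)=0$ forces every $\ket{v_k}=0$, so your $\omega$-identities still hold; and the final extension from linearly independent pairs to all product vectors follows either by continuity or simply because products with independent factors already span $\mathcal{H}^{\otimes 2}$.
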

\begin{remark}
In particular, for two pairs $\left(\mathbb{P}_{1},\mathbb{P}_{2}\right)$
and $\left(\mathbb{P}_{1},\mathbb{P}'_{2}\right)$ the representant
of $\mathbb{P}_{1}$ can be different for each pair.
\end{remark}
\begin{proof}[Sketch of the proof]
Assume that there exist a nonzero CP map $\Lambda$ satisfying (\ref{eq:sup condition}).
Let the collection of operators $\left\{ V_{i}\right\} _{i\in I}\,,\, V_{i}:\mathcal{H}^{\otimes2}\rightarrow\mathcal{H}$
, form the Kraus decomposition \cite{Nielsen2010} of $\Lambda$,
$\Lambda\left(\rho\right)=\sum_{i\in I}V_{i}\rho V_{i}^{\dagger}$.
Since operators $\lambda\kb{\Psi}{\Psi}$, $\lambda\geq0$, belong
to the extreme ray of the cone of nonnegative operators on $\mathcal{H}$
we must have 
\begin{equation}
V_{i}\mathbb{P}_{1}\otimes\mathbb{P}_{2}V_{i}^{\dagger}\propto\kb{\Psi}{\Psi}\,,\,\text{for all}\, i\in I.\label{eq:single Krauss.}
\end{equation}
Consequently, it is enough to consider only CP maps that have one
operator in their Kraus decomposition. In such case (\ref{eq:sup condition})
reduces to the investigation of a single linear operator. If (\ref{eq:single Krauss.})
is satisfied then it necessary must hold for $\mathbb{P}_{1},\,\mathbb{P}_{2}$
having support on two dimensional subspaces of $\mathcal{H}$. Therefore,
it suffices to show that in the qbit case only operators $V_{i}$
that satisfy condition (\ref{eq:single Krauss.}) are the null operators.
We present the proof of this in the Supplemental Material \cite{supp}.
The main difficulty of the proof stems from the fact that the condition
(\ref{eq:single Krauss.}) is non-linear in the input state $\mathbb{P}_{1}\otimes\mathbb{P}_{2}$. 
\end{proof}
Theorem \ref{nogo theorem} shows that, even if we allow for postselection,
there exist no quantum operation that produces superpositions of all unknown
pure quantum states with some probability (we allowed this probability
to be zero for some pairs of input states and in general it can be
different for different inputs). 
 We would like to stress
that the creation of superpositions is still impossible even if we allow
for the arbitrary dependence of the relative phase of the input states. Namely,
in our formulation of the problem we explicitly assumed 
that vector representatives $\ket{\psi},\,\ket{\phi}$ of states $\mathbb{P}_{\psi}$
and $\mathbb{P}_{\phi}$ are some functions of these states. As a matter of fact, otherwise 
one would not be able to formulate the problem of generation of superpositions
in a consistent manner. 
 We  emphasize that in that respect the problem of creation of superpositions
is different to quantum cloning 
\footnote{In quantum cloning one seeks of a physical transformation that would
realize the (well-defined) mapping $\kb{\psi}{\psi}\rightarrow \kb{\psi}{\psi}^{\otimes k}$,
which is a well-defined function.}.  Moreover, to our best knowledge, there is no
immediate connection between the no-cloning theorem \cite{Wootters1982,Dieks1982}
and its generalized variants (such as no-deleting theorem \cite{Pati2000}
or no-anticloning theorem \cite{Pati2002}) to our result. This is
a consequence of the fact that $\Lambda$ must be non-invertible and
therefore cannot be used to obtain a cloning map. Moreover, in the
formulation of the theorem we allow for situations in which for some input
states $\mathbb{P}_{\psi}\otimes\mathbb{P}_{\phi}$ the probability
of success is zero. 
\paragraph{Constructive protocol}
It is  natural to study whether it is possible to
create quantum superpositions if we have some knowledge about the
input states. Except for specifying the class of input states for
which a given protocol would work, it is also necessary to prescribe
precisely which superpositions will be generated (see discussion before
Eq.(\ref{eq:infinite family})). In what follows we present an explicit
protocol that generates superpositions of unknown pure states $\mathbb{P}_{\psi},\,\mathbb{P}_{\phi}$
having fixed nonzero overlaps with some referential pure state $\mathbb{P}_{\chi}$
(see Figure \ref{fig:bloch}). Let us describe the superpositions
that will be generated by our protocol. Let $\ket{\chi}$ be a vector
representative of $\mathbb{P}_{\chi}$. For every pair of normalised
vectors $\ket{\psi},\ket{\phi}$ satisfying $\bk{\chi}{\psi}\neq0,\,\bk{\chi}{\psi}\neq0$
we define their superposition
\begin{equation}
\ket{\Psi}=\alpha\frac{\bk{\chi}{\phi}}{\left|\bk{\chi}{\phi}\right|}\ket{\psi}+\beta\frac{\bk{\chi}{\psi}}{\left|\bk{\chi}{\psi}\right|}\ket{\phi}\,,\label{eq:proper superpositions}
\end{equation}
The norm of this vector is given by
\begin{equation}
\mathcal{N}_{\Psi}=\sqrt{1+2\cdot\mathrm{Re}\left(\bar{\alpha}\beta\frac{\mathrm{tr}\left(\mathbb{P}_{\ket{\chi}}\mathbb{P}_{\ket{\psi}}\mathbb{P}_{\ket{\phi}}\right)}{\left|\bk{\chi}{\phi}\right|\left|\bk{\chi}{\psi}\right|}\right)}\,.\label{eq:norm constant2}
\end{equation}
The vector $\ket{\Psi}$ changes only by a global phase once any of
the vectors $\ket{\psi},\ket{\phi},\,\ket{\chi}$ gets multiplied
by a phase factor. Consequently, $\mathbb{P}_{\Psi}$ can be regarded
as well-defined function of the states $\mathbb{P}_{\ket{\psi}},\,\mathbb{P}_{\ket{\phi}}$,
provided they have nonzero overlap with $\mathbb{P}_{\chi}$. This
can be also seen from the explicit formula,
\begin{align}
\kb{\Psi}{\Psi} & =\left|\alpha\right|^{2}\mathbb{P}_{\psi}+\left|\beta\right|^{2}\mathbb{P}_{\phi}+\nonumber \\
 & +\left(\alpha\beta^{\ast}\frac{\mathbb{P}_{\psi}\mathbb{P}_{\chi}\mathbb{P}_{\phi}}{\sqrt{\mathrm{tr}\left(\mathbb{P}_{\psi}\mathbb{P}_{\chi}\right)}\sqrt{\mathrm{tr}\left(\mathbb{P}_{\phi}\mathbb{P}_{\chi}\right)}}+h.c\right)\,.\label{eq:full formula projector}
\end{align}
One could argue that the above choice of the superposition $\kb{\Psi}{\Psi}$ 
is somewhat arbitrary. However, the mapping $\left(\mathbb{P}_{\psi},\mathbb{P}_{\phi}\right)\rightarrow\kb{\Psi}{\Psi}$ is related to the so-called Pancharatnam
connection and appears in studies concerning the superposition rules
from the perspective of geometric approach to quantum mechanics \cite{Ercolessi2010,Manko2002}.
Moreover, it shown in \cite{Chaturvedi2013} that Eq.\eqref{eq:proper superpositions}
has a strong connection with the concept of the geometric phase. Finally,
from the purely operational grounds, Eq.\eqref{eq:proper superpositions}
constitute a rightful superposition of states $\mathbb{P}_{\psi},\,\mathbb{P}_{\phi}$
and as we vary coefficients $\alpha,\beta$ we can recover all possible
superpositions of $\mathbb{P}_{\psi},\,\mathbb{P}_{\phi}$. 
\begin{figure}[h]
\begin{centering}
\includegraphics[width=2.4cm]{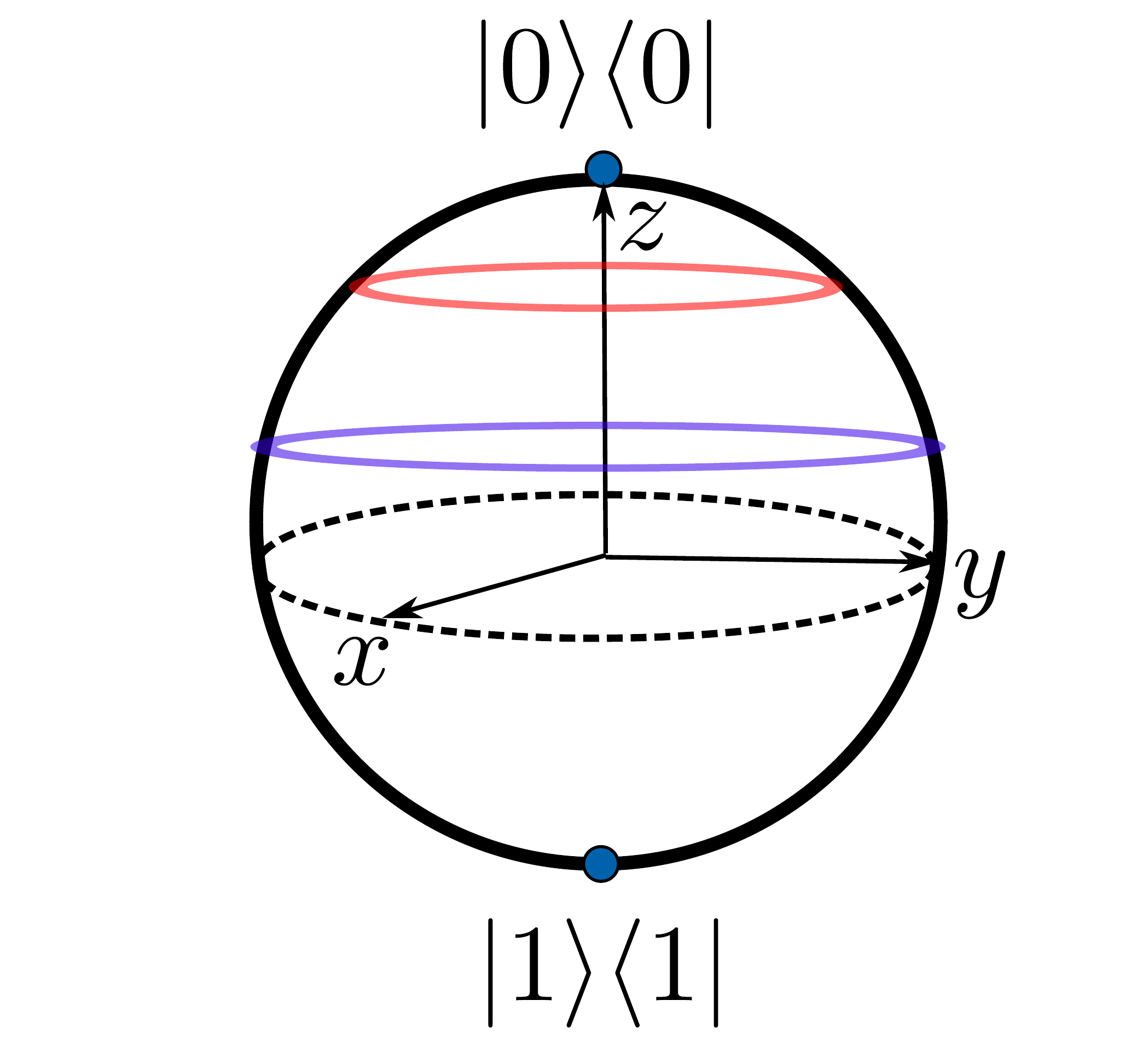}
\par\end{centering}
\protect\caption{\label{fig:bloch}Graphical representation of the class of input states
satisfying $\mathrm{tr}\left(\mathbb{P}_{\chi}\mathbb{P}_{\psi}\right)=c_{1}\,,\,\mathrm{tr}\left(\mathbb{P}_{\chi}\mathbb{P}_{\phi}\right)=c_{2}$
for $\mathcal{H}=\mathbb{C}^{2}$. For convenience we set $\mathbb{P}_{\chi}=\protect\kb 00$.}
\end{figure}
\begin{thm}
\label{protoco qbit} Let $\mathbb{P}_{\chi}$ be a fixed pure state
on Hilbert space $\mathcal{H}$. There exist a CP map $\Lambda_{sup}\in\mathcal{CP}\left(\mathbb{C}^{2}\otimes\mathcal{H}^{\otimes2},\mathcal{H}\right)$
such that for all pure states $\mathbb{P}_{\psi},\,\mathbb{P}_{\phi}$
on $\mathcal{H}$ satisfying 
\begin{equation}
\mathrm{tr}\left(\mathbb{P}_{\chi}\mathbb{P}_{\psi}\right)=c_{1}\,,\,\mathrm{tr}\left(\mathbb{P}_{\chi}\mathbb{P}_{\phi}\right)=c_{2}\,,\label{eq:conditions}
\end{equation}
 we have
\begin{equation}
\Lambda_{sup}\left(\mathbb{P}_{\nu}\otimes\mathbb{P}_{\psi}\otimes\mathbb{P}_{\phi}\right)\propto\kb{\Psi}{\Psi}\,,\label{eq:protocole result}
\end{equation}
where $\mathbb{P}_{\nu}$ , $\ket{\nu}=\alpha\ket 0+\beta\ket 1$,
is an unknown qbit state and the vector $\ket{\Psi}$ is given by
(\ref{eq:proper superpositions}). Moreover, a CP map $\Lambda_{sup}$
realising (\ref{eq:protocole result}) is unique up scaling.\end{thm}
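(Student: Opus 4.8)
The plan is to establish both existence and uniqueness from a single explicit Kraus operator. The key simplification is that on the constraint surface \eqref{eq:conditions} the moduli $\left|\bk\chi\psi\right|=\sqrt{c_{1}}$ and $\left|\bk\chi\phi\right|=\sqrt{c_{2}}$ are fixed, so the phase factors in \eqref{eq:proper superpositions} may be written as $\bk\chi\phi/\sqrt{c_{2}}$ and $\bk\chi\psi/\sqrt{c_{1}}$. Then $\ket\Psi$ becomes \emph{linear} in each of $\ket\psi,\ket\phi$ and in $\left(\alpha,\beta\right)$, which suggests defining $V:\mathbb{C}^{2}\otimes\mathcal{H}\otimes\mathcal{H}\rightarrow\mathcal{H}$,
\begin{equation}
V=\frac{1}{\sqrt{c_{2}}}\,\bra 0\otimes\mathbb{1}\otimes\bra\chi+\frac{1}{\sqrt{c_{1}}}\,\bra 1\otimes\bra\chi\otimes\mathbb{1}\,,
\end{equation}
and $\Lambda_{sup}\left(\rho\right)=V\rho V^{\dagger}$. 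A direct check gives $V\left(\ket\nu\otimes\ket\psi\otimes\ket\phi\right)=\ket\Psi$ whenever \eqref{eq:conditions} holds, which is \eqref{eq:protocole result}; rescaling $V$ by a small constant renders $\Lambda_{sup}$ trace non-increasing, hence admissible.

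For uniqueness I would first reduce to a single Kraus operator exactly as in Theorem \ref{nogo theorem}: since $\kb\Psi\Psi$ lies on an extreme ray, every Kraus operator $W$ of any admissible $\Lambda$ satisfies $W\left(\ket\nu\otimes\ket\psi\otimes\ket\phi\right)\propto\ket\Psi$ on its own. Fixing $\ket\psi,\ket\phi$ and using linearity in $\ket\nu$ together with the linear independence of the two terms of $\ket\Psi$ (valid when $\ket\psi\not\propto\ket\phi$; the remaining configurations follow by continuity) forces $W\left(\ket 0\otimes\ket\psi\otimes\ket\phi\right)=p\,\bk\chi\phi\ket\psi/\sqrt{c_{2}}$ and $W\left(\ket 1\otimes\ket\psi\otimes\ket\phi\right)=p\,\bk\chi\psi\ket\phi/\sqrt{c_{1}}$ with one and the \emph{same} scalar $p=p\left(\ket\psi,\ket\phi\right)$; this shared constant, a consequence of coherence in $\ket\nu$, is what drives the rest of the argument.

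The hard part will be to show that $p$ is constant. Here I would substitute $\ket\psi=\sqrt{c_{1}}e^{i\theta_{1}}\ket\chi+\sqrt{1-c_{1}}\ket{\chi_{\perp}}$ and $\ket\phi=\sqrt{c_{2}}e^{i\theta_{2}}\ket\chi+\sqrt{1-c_{2}}\ket{\phi_{\perp}}$, with $\ket{\chi_{\perp}},\ket{\phi_{\perp}}\perp\ket\chi$ and independent phases $\theta_{1},\theta_{2}$, and expand the two bilinear maps $W\left(\ket 0\otimes\cdot\right)$ and $W\left(\ket 1\otimes\cdot\right)$. Each identity then equates two finite trigonometric polynomials in $\theta_{1},\theta_{2}$ supported on the frequencies $\left\{ 0,1\right\} \times\left\{ 0,1\right\}$, whose coefficients are fixed vectors. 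Projecting onto $\ket\chi$ and onto $\ket{\chi_{\perp}}$ (respectively $\ket{\phi_{\perp}}$) and matching modes shows that $p\,e^{i\theta_{2}}$ and $p\,e^{i\left(\theta_{1}+\theta_{2}\right)}$ coming from the $\ket 0$ relation, and the analogous combinations from the $\ket 1$ relation, all lie in that same band; intersecting the resulting frequency supports leaves only the zero mode, so $p$ cannot depend on $\theta_{1},\theta_{2}$. Finally, the top mode equates the fixed vector $W\left(\ket 0\otimes\ket\chi\otimes\ket\chi\right)$ to $p\,\ket\chi/\sqrt{c_{2}}$, which forces $p$ to be independent of $\ket{\chi_{\perp}},\ket{\phi_{\perp}}$ as well. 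Disentangling the unknown $p$ from the bilinear data despite the nonlinearity is, I expect, the main obstacle.

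With $p$ constant, the same mode matching identifies both bilinear maps with those of $V$, giving $W=p\,V$ on all of $\mathbb{C}^{2}\otimes\mathcal{H}\otimes\mathcal{H}$. Summing $W_{i}\rho W_{i}^{\dagger}$ over the Kraus index yields $\Lambda=\left(\sum_{i}\left|p_{i}\right|^{2}\right)V\rho V^{\dagger}$, so every admissible map is proportional to $\Lambda_{sup}$, which is precisely uniqueness up to scaling.
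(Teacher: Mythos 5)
Your proposal is correct and follows essentially the same route as the paper: an explicit Kraus operator for existence (your single operator $V$ is, up to normalization, exactly the paper's circuit $\Lambda_{4}\circ\Lambda_{3}\circ\Lambda_{2}\circ\Lambda_{1}$ collapsed into one map), and for uniqueness the same chain of reductions --- extreme-ray argument to a single Kraus operator, linearity in $\ket{\nu}$ to eliminate the $(\alpha,\beta)$-dependence of the proportionality scalar, decomposition of $\ket{\psi},\ket{\phi}$ into components along and orthogonal to $\ket{\chi}$, and Fourier-mode matching in the two phases to force the scalar to be constant. The only minor variations are presentational (you place the free phases on the $\ket{\chi}$-components rather than on the perpendicular parts, and your division trick makes the smoothness of the scalar automatic where the paper argues it separately), so no substantive gap remains.
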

\begin{proof}
We first present a protocol that realizes (\ref{eq:protocole result}).\textcolor{black}{{}
Let us define and auxiliary normalized qbit vector $\ket{\mu}=\mathcal{C}\cdot\left(\sqrt{c_{1}}\ket 0+\sqrt{c_{2}}\ket 1\right)$,
where $\mathcal{C}$ is a normalization constant. We set }$\Lambda_{sup}=\Lambda_{4}\circ\Lambda_{3}\circ\Lambda_{2}\circ\Lambda_{1}$,
where
\begin{eqnarray}
\Lambda_{1}\left(\rho\right) & = & V_{1}\rho V_{1}^{\dagger}\,,\, V_{1}=\kb 00\otimes\mathbb{I}\otimes\mathbb{I}+\mathbb{\kb 11}\otimes\mathbb{S}\,,\\
\Lambda_{2}\left(\rho\right) & = & V_{2}\rho V_{2}^{\dagger}\,,\, V_{2}=\mathbb{I}\otimes\mathbb{I}\otimes\kb{\chi}{\chi}\,,\\
\Lambda_{3}\left(\rho\right) & = & V_{3}\rho V_{3}\,,\, V_{3}=\mathbb{P}_{\mu}\otimes\mathbb{I}\otimes\mathbb{I}\,,\\
\Lambda_{4}\left(\rho\right) & = & \mathrm{tr}_{13}\left(\rho\right)\,.
\end{eqnarray}
In the above $\mathbb{S}$ denotes the unitary operator that swaps
between two copies of $\mathcal{H}$ and $\mathrm{tr}_{13}\left(\cdot\right)$
is the partial trace over the first and the third factor in the tensor
product $\mathbb{C}^{2}\otimes\mathcal{H}\otimes\mathcal{H}$. For
a graphical presentation of the above protocol see Fig.\ref{fig:circut}.
Operation $\Lambda_{sup}$ is completely positive and trace
non-increasing. Direct calculation shows that under the assumed conditions
(\ref{eq:protocole result}) indeed holds. We prove the uniqueness
result in the Supplemental Material \cite{supp}.
\end{proof}
\begin{figure}[h]
\begin{centering}
\includegraphics[width=5cm]{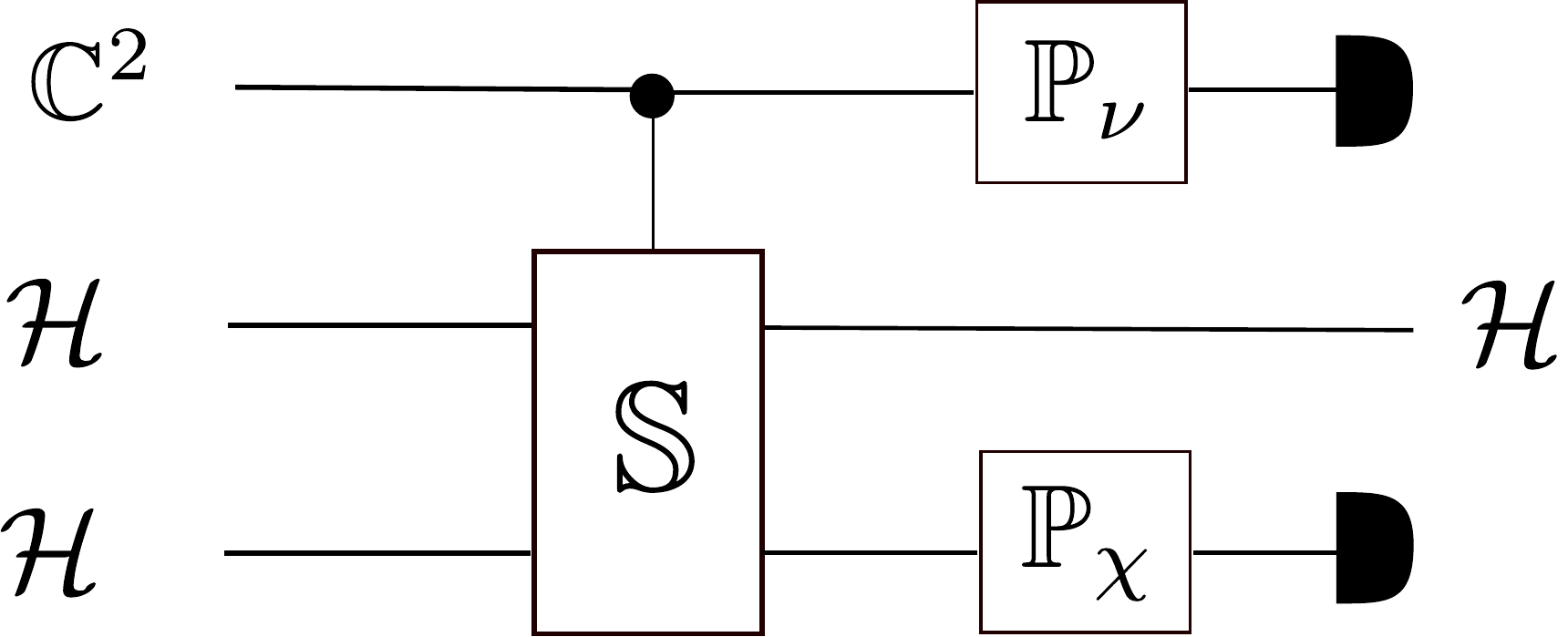}
\par\end{centering}
\protect\caption{\label{fig:circut}Graphical representation of the circut realising
the map $\Lambda_{sup}$.}
\end{figure}
The probability that the above protocol will successfully create superpositions
of states is given by
\begin{equation}
P_{succ}=\mathrm{tr}\left[\Lambda_{sup}\left(\mathbb{P}_{\nu}\otimes\mathbb{P}_{\psi}\otimes\mathbb{P}_{\phi}\right)\right]=\frac{c_{1}c_{2}}{c_{1}+c_{2}}\mathcal{N}_{\Psi}^{2}\,.\label{eq:p succ}
\end{equation}
The map $\Lambda_{sup}$ cannot be rescaled to increase the probability
of success. This follows from the (tight) operator inequality $\left(V_{3}V_{2}V_{1}\right)^{\dagger}\left(V_{3}V_{2}V_{1}\right)\leq\mathbb{I\otimes\mathbb{I\otimes}\mathbb{I}}$. Taking into account the uniqueness (up to scaling) of $\Lambda_{sup}$ we
get that $P_{succ}$ from (\ref{eq:p succ}) is the maximal achievable
probability of success (for inputs specified in the assumptions of
Theorem (\ref{protoco qbit})). However, for fixed coefficients $\alpha,\beta$
it is possible to design a CP map that can achieve higher probability
of success \cite{supp}. Moreover, it is possible to generalize  the protocol $\Lambda_{sup}$ to the situation when we have
of $d$ input states (having nonzero overlap with $\mathbb{P}_{\chi}$)
and coefficients of superposition are encoded in an unknown state
of a qdit \cite{supp}. 

The existence of the map $\Lambda_{sup}$ shows that the problem of
creating superpositions of quantum states differs greatly from the
cloning problem. Probabilistic quantum cloning of pure states is possible
if and only if we have a promise that the input states belong to the
family of states whose vector representatives form a linearly independent
set \cite{Duan1998}. Consequently, the aforementioned family of states
must be discrete. Our protocol shows that it is possible to probabilisticly
create superpositions from unknown quantum states belonging to uncountable
families of quantum states. 

\paragraph{Applications}

There exist deterministic circuits realizing
classical arithmetic operations (like addition, multiplication, exponentiation
etc.) on a quantum computer \cite{Vedral1996}. However, to our best
knowledge there exist no protocols realizing addition 
on vectors belonging to the Hilbert space responsible for the computation. We now present a method to generate coherent superposition of results of quantum computations.  Assume that $\alpha=\beta=\sqrt{c_1}=\sqrt{c_2}=\frac{1}{\sqrt{2}}$.   By setting the overlap of vector representatives of $\mathbb{P}_\psi$ and $\mathbb{P}_\phi$ with $\ket{\chi}$ to be positive we  get
\begin{equation} \label{eq:encoding}
\ket{\psi}=\frac{1}{\sqrt{2}} \ket{\chi}+\frac{1}{\sqrt{2}}\ket{\psi^\perp} \ , \ket{\phi}=\frac{1}{\sqrt{2}} \ket{\chi}+\frac{1}{\sqrt{2}}\ket{\phi^\perp} \ ,
\end{equation}
where unit vectors $\ket{\psi^\perp},\ket{\phi^\perp}$ are perpendicular to $\ket{\chi}$. Input  states $\mathbb{P}_\psi$, $\mathbb{P}_\phi$ are in one-to-one correspondence with the vectors $\ket{\psi^\perp},\ket{\phi^\perp}$. By the application of $\Lambda_{sup}$ it is possible to obtain a state having the (non-normalized) vector representative 
\begin{equation}\label{eq:encodedsum}
\ket{\Psi}=\ket{\chi}+\frac{1}{2}\left( \ket{\psi^\perp} +\ket{\phi^\perp}  \right) \ ,
\end{equation} 
with probability $P_{succ}=\frac{1}{4}\left(1+\frac{1}{4}\left\|\ket{\psi^\perp}+\ket{\phi^\perp}\right\|^2\right)\geq \frac14$.
We have obtained a state encoding the superposition of \textit{unknown} vectors $\ket{\psi^\perp}, \ket{\phi^\perp}$ encoded in states $\mathbb{P}_\psi$ and $\mathbb{P}_\phi$ respectively.  The method presented above effectively superposes the wavefunctions coherently, provided one has access to the auxiliary one dimensional subspace (spanned by $\ket{\chi}$).  It is highly unexpected but by changing the perspective and by treating as "primary" objects the vectors perpendicular to $\ket{\chi}$ we have managed to effectively get around the  no-go result from Theorem 1. To apply the above protocol, one has to run quantum computation in a the perpendicular space. 
In Supplemental Material \cite{supp}, we present an exemplary scheme implementing such computation . 

The protocol $\Lambda_{sup}$ can be also used 
 to generate nonclassical sates in the context of quantum
optics. Let the states $\mathbb{P}_{\psi},\mathbb{P}_{\phi}$ describe
quantum fields in two different optical modes. Hilbert spaces associated
each of the modes are isomorphic and can be identified with the single-mode
bosonic Fock space. Moreover, let the auxiliary qbit be encoded in
a polarization of a single photon in different optical mode or in another
two level physical system. In such a setting the natural choice of the state
$\mathbb{P}_{\chi}$ is the Fock vacuum $\kb{0_{F}}{0_{F}}$ describing
the state of the field with no photons. As an input we can put coherent
or pure Gaussian states \cite{Mandel1995} that have fixed overlaps
with the vacuum. Then, the protocol $\Lambda_{sup}$ generically creates
highly nonclassical or respectively non-gaussian states. Operations
$\Lambda_{2},\Lambda_{3},\Lambda_{4}$ are relatively easy to realize
in this setting. The most demanding operation is the conditional swap
$\Lambda_{1}$. However, conditional swap can be realized in the optical
setting via implementation of phase flip operation and standard beam
splitters \cite{Filip2002}. The phase flip operation on the other
hand can in principle \cite{Jeong2014} be obtained in the optical
setting by coupling light to atoms inside the cavity, trapped ions,
or by the usage of cross-Kerr nonlinearities in materials with electromagnetically
induced transparency. Despite the possible difficulties with the implementation
the map $\Lambda_{sup}$ is  worth realizing  as it gives the maximal
probability of success. Moreover, the protocol $\Lambda_{sup}$ is
universal and can be used in different physical scenarios.

\paragraph{Discussion}
There is a number of open questions we did not adress here. First
of all, the relation of our no-go theorem to other no-go results in
quantum mechanics is not clear and requires further investigation.
The constructive protocol presented by us suggest a connection with
the recent works concerning the problem of controlling an unknown
unitary operation \cite{Zhou2011,Thompson2013,Feix2014,Bisio2015} (the referential pure
states $\mathbb{P}_{\chi}$ can be regarded as an analogue of the
known eigenvector of the ``unknown'' operation $U$ which allows
for its control). It would be also natural to study the problem of
approximate generation of quantum superpositions in (in analogy to
the problem of approximate cloning \cite{Werner1998}). Another
possible line of research is to investigate the probabilistic
protocols designed especially to generate superpositions of states
naturally appearing in the experimental context (like pure coherent
or Gaussian states).
\begin{acknowledgments}
We age grateful to Andreas Winter for suggesting the encoding scheme \eqref{eq:encoding}. We would also like to thank Remigiusz Augusiak, Konrad Banaszek, Rafa\l{}
Demkowicz-Dobrza\'{n}ski, Leonardo Guerini de Souza, Stefan Fischer, Ryszard Horodecki and Marek
Ku\'{s} for fruitful and inspiring discussions. We acknowledge the support of EU project SIQS, ERC grants CoG QITBOX and AdG QOLAPS, FOQUS the Generalitat de Catalunya (SGR 875) and Foundation for Polish Science TEAM project co-financed by
the EU European Regional Development Fund. M. O acknowledges the support of START scholarship granted by Foundation for Polish Science.
Part of this work was done in National Quantum Information Centre of Gda\'{n}sk.
\end{acknowledgments}

\bibliography{suplit}

\newpage

\section*{Supplemental material}
\appendix

\subsection*{Part A: No-go theorem}
\begin{proof}[Final step of the proof of Theorem 1]
We will show that there exist no nonzero linear mapping $V:\mathcal{\mathbb{C}}^{2}\otimes\mathbb{C}^{2}\rightarrow\mathbb{C}^{2}$
such that for all pairs of input pure qbit states $\mathbb{P}_{\psi},\mathbb{P}_{\phi}$
we have
\begin{equation}
V\mathbb{P}_{\psi}\otimes\mathbb{P}_{\phi}V^{\dagger}\propto\kb{\Psi}{\Psi}\,,\, \tag{S.1} \label{eq:qbit condition map}
\end{equation}
where

\begin{equation}
\ket{\Psi}=\alpha\ket{\psi}+\beta\ket{\phi}\, \tag{S.2},\label{eq:superp vector-1}
\end{equation}
where $\alpha,\beta$ are fixed nonzero complex numbers satisfying $\left|\alpha\right|^{2}+\left|\beta\right|^{2}=1$, and vector
representatives $\ket{\psi},\ket{\phi}$ are given by 

\begin{equation}
\ket{\psi}=\mathcal{F}\left(\mathbb{P}_{\psi},\mathbb{P}_{\phi}\right)\,,\,\ket{\phi}=\mathcal{G}\left(\mathbb{P}_{\psi},\mathbb{P}_{\phi}\right)\,, \tag{S.3}
\end{equation}
for some fixed functions $\mathcal{F},\mathcal{G}$. Let us fix the
standard product basis of $\mathbb{C}^{2}\otimes\mathbb{C}^{2}$ and
let us order it in a lexicographic order.
\[
\ket{v_{1}}=\ket 0\ket 0,\,\ket{v_{2}}=\ket 0\ket 1,\,\ket{v_{3}}=\ket 1\ket 0,\,\ket{v_{4}}=\ket 1\ket 1. \tag{S.4}
\]
Likewise, let us introduce the standard basis in the output Hilbert
space $\mathbb{C}^{2}$, $\ket{f_{1}}=\ket 0,\,\ket{f_{2}}=\ket 1$.
For such a choice of the basis the operator $V$ can be described as $2\times4$ matrix

\begin{equation}
V=\begin{pmatrix}a & b & c & d\\
e & f & g & h
\end{pmatrix}\,. \tag{S.5} \label{eq:operator V form}
\end{equation}
The condition (\ref{eq:qbit condition map}) can be written in the
form 

\begin{equation}
V\ket{\psi}\ket{\phi}=\mathcal{C}\left(\ket{\psi},\ket{\phi}\right)\left(\alpha\ket{\psi}+\beta\cdot\mathcal{Z}\left(\ket{\psi},\ket{\phi}\right)\ket{\phi}\right)\,,\tag{S.6} \label{eq:vectors condition}
\end{equation}
where $\ket{\psi},\ket{\phi}\in\mathbb{C}^{2}$ are unit vectors,
$\mathcal{C}\left(\ket{\psi},\ket{\phi}\right)$ is a complex-valued
function and $\mathcal{Z}\left(\ket{\psi},\ket{\phi}\right)$ is a
function taking values in the unit circle and satisfying $\mathcal{Z}\left(\ket{\psi},\ket{\phi}\right)=\mathcal{Z}\left(\mathrm{exp}\left(i\theta\right)\ket{\psi},\mathrm{exp}\left(i\theta\right)\ket{\phi}\right)$
, for all $\theta\in\mathbb{R}$. Taking $\ket{\psi}=\ket{\phi}$
in (\ref{eq:vectors condition}) we obtain 
\begin{equation}
V\ket{\psi}\ket{\psi}\perp\ket{\psi^{\perp}}\, \tag{S.7} ,\label{eq:perp condition}
\end{equation}
where $\ket{\psi^{\perp}}\in\mathbb{C}^{2}$ is an arbitrary vector
perpendicular to $\ket{\psi}$. Using (\ref{eq:perp condition}) for
\begin{equation}
\ket{\psi}\propto\ket 0+\alpha\ket 1\,,\,\ket{\psi^{\perp}}\propto-\bar{\alpha}\ket 0+\ket 1\,,\,\alpha\in\mathbb{C}\label{eq:indexation of vectors} \tag{S.8}
\end{equation}
we obtain 
\begin{equation}
V=\begin{pmatrix}f+g & b & c & 0 \\
0 & f & g & b+c
\end{pmatrix}\label{eq:simplified V} \ \tag{S.9} .
\end{equation}
Using the above we obtain that for every pair of vectors $\ket{\psi},\ket{\phi}\in\mathbb{C}^{2}$,

\begin{equation}
V\ket{\psi}\ket{\phi}=\bk{\chi_{1}}{\phi}\ket{\psi}+\bk{\chi_{2}}{\psi}\ket{\phi}\ ,\label{eq:simplification} \tag{S.10}
\end{equation}
where
\begin{gather}
\ket{\chi_{1}}=\bar{g}\ket 0+\bar{b}\ket 1\,, \tag{S.11}\\ 
\ket{\chi_{2}}=\bar{f}\ket 0+\bar{c}\ket 1\,. \tag{S.12}
\end{gather}
From (\ref{eq:simplification}) and (\ref{eq:vectors condition})
we obtain that for all unit vectors $\ket{\psi},\ket{\phi}\in\mathbb{C}^{2}$

{\small{}
\begin{equation}
\left(\alpha\cdot\mathcal{C}\left(\ket{\psi},\ket{\phi}\right)-\bk{\chi_{1}}{\phi}\right)\ket{\psi}+\left(\beta\cdot\tilde{\mathcal{C}}\left(\ket{\psi},\ket{\phi}\right)-\bk{\chi_{2}}{\psi}\right)\ket{\phi}=0\,,\label{eq:another condition} \tag{S.13}
\end{equation}
}where $\tilde{\mathcal{C}}\left(\ket{\psi},\ket{\phi}\right)=\mathcal{C}\left(\ket{\psi},\ket{\phi}\right)\mathcal{Z}\left(\ket{\psi},\ket{\phi}\right)$.
Setting in the above $\ket{\psi}=\ket{\chi_{2}^{\perp}}$ ($\ket{\chi_{2}^{\perp}}$ is
some unit vector perpendicular to $\ket{\chi_{2}}$) we obtain 
\begin{equation}
\left(\alpha\cdot\mathcal{C}\left(\ket{\chi_{2}^{\perp}},\ket{\phi}\right)-\bk{\chi_{1}}{\phi}\right)\ket{\chi_{2}^{\perp}}+\beta\cdot\tilde{\mathcal{C}}\left(\ket{\chi_{2}^{\perp}},\ket{\phi}\right)\ket{\phi}=0\,.\label{eq:final simplification} \tag{S.14}
\end{equation}
Since $\alpha,\beta\ne0$ and $\ket{\phi}$can be chosen in arbitrary
manner we obtain
\begin{gather}
\alpha\cdot\mathcal{C}\left(\ket{\chi_{2}^{\perp}},\ket{\phi}\right)-\bk{\chi_{1}}{\phi}=0\,, \tag{S.14}\\
\tilde{\mathcal{C}}\left(\ket{\chi_{2}^{\perp}},\ket{\phi}\right)=0\,, \tag{S.15}
\end{gather}
whenever $\ket{\chi_{2}^{\perp}}$ not linearly dependent with $\ket{\phi}$.
Consequently we obtain that $\mathcal{C}\left(\ket{\chi_{2}^{\perp}},\ket{\phi}\right)=\tilde{\mathcal{C}}\left(\ket{\chi_{2}^{\perp}},\ket{\phi}\right)=0$
and consequently $\bk{\chi_{1}}{\phi}=0$ for all $\ket{\phi}$ not
parallel to $\ket{\chi_{2}^{\perp}}$. Consequently we obtain $\ket{\chi_{1}}=0$.
An analogous argument shows that $\ket{\chi_{2}}=0$.
\end{proof}

\subsection*{Part B: Constructive protocols for superposition of two states}

In this part we complete the proof of Theorem 2
and derive the formulas for the maximal probability of success for
the generation of superposition $\mathbb{P}_{\Psi}$ via the usage
of CP maps $\Lambda\in\mathcal{CP}\left(\mathbb{C}^{2}\otimes\mathcal{H}^{\otimes2},\mathcal{H}\right)$. 
\begin{proof}[Proof of the uniqueness result from Theorem 2]
 Let $\ket{\nu}=\alpha\ket 0+\beta\ket 1$ and let $\mathbb{P}_{\psi},\mathbb{\mathbb{P}_{\chi}}$
be states on $\mathcal{H}$ satisfying $\mathrm{tr}\left(\mathbb{P}_{\psi}\mathbb{P}_{\chi}\right)=c_{1},$~$\mathrm{tr}\left(\mathbb{P}_{\phi}\mathbb{P}_{\chi}\right)=c_{2}$.
Let now $\Lambda\in\mathcal{CP}\left(\mathbb{C}^{2}\otimes\mathcal{H}^{\otimes2},\mathcal{H}\right)$
be the CP map satisfying 

\begin{equation}
\Lambda\left(\mathbb{P}_{\nu}\otimes\mathbb{P}_{\psi}\otimes\mathbb{P}_{\phi}\right)\propto\kb{\Psi}{\Psi}\,, \tag{S.16}
\end{equation}
where 
\begin{equation}
\ket{\Psi}=\alpha\frac{\bk{\chi}{\phi}}{\left|\bk{\chi}{\phi}\right|}\ket{\psi}+\beta\frac{\bk{\chi}{\psi}}{\left|\bk{\chi}{\psi}\right|}\ket{\phi}\, \tag{S.17} \label{eq:desired super}
\end{equation}
is the superposition of states we want to generate. Let $\left\{ V_{i}\right\} _{i\in I}\,,\, V_{i}:\mathbb{C}^{2}\otimes\mathcal{H}^{\otimes2}\rightarrow\mathcal{H}$
, form the Kraus decomposition \cite{Nielsen2010} of $\Lambda$.
Using the analogous argumentation to the one presented in the proof
of Theorem 1 we get that 
\begin{equation}
V_{i}\mathbb{P}_{\nu}\otimes\mathbb{P}_{\psi}\otimes\mathbb{P}_{\phi}V_{i}^{\dagger}\propto\kb{\Psi}{\Psi}\,,\,\text{for all}\, i\in I. \tag{S.18} \label{eq:second kraus}
\end{equation}
Let us focus on a single Kraus operator $V_{i}$. In what follows
we will drop the in index $i$ for simplicity. From (\ref{eq:second kraus})
we get 

\begin{equation}
V\ket{\nu}\ket{\psi}\ket{\phi}=\mathcal{A}\left(\alpha,\beta,\ket{\psi},\ket{\phi}\right)\cdot\left(\alpha\frac{\bk{\chi}{\phi}}{\left|\bk{\chi}{\phi}\right|}\ket{\psi}+\beta\frac{\bk{\chi}{\psi}}{\left|\bk{\chi}{\psi}\right|}\ket{\phi}\right)\, \tag{S.19},\label{eq:cration single craus}
\end{equation}
for all vectors $\ket{\psi},\,\ket{\phi}\in\mathcal{H}$ satisfying 
\begin{equation}
\left|\bk{\chi}{\psi}\right|=\sqrt{c_{1}}\,,\,\left|\bk{\chi}{\phi}\right|=\sqrt{c_{2}},\, \tag{S.20},
\end{equation} 
arbitrary $\ket{\nu}=\alpha\ket 0+\beta\ket 1$ and for $\mathcal{A}\left(\alpha,\beta,\ket{\psi},\ket{\phi}\right)$
being some unknown function. We will now show that condition (\ref{eq:cration single craus})
defines $V$ uniquely up to a multiplicative constant. Having this
result we will be able infer the uniqueness of $\Lambda$ (up to scaling).
Using the linearity of the left hand side of (\ref{eq:cration single craus})
in $\ket{\nu}$ we get 
\begin{equation}
\mathcal{A}\left(\alpha,\beta,\ket{\psi},\ket{\phi}\right)=\mathcal{A}\left(\ket{\psi},\ket{\phi}\right)\, \tag{S.21}.\label{eq:simplification2} 
\end{equation}
Moreover, from the linearity of $V$ and the condition (\ref{eq:cration single craus})
it follows that 
\begin{equation}
\mathcal{A}\left(\ket{\psi},\ket{\phi}\right)=\mathcal{A}\left(\mathrm{exp}\left(i\theta_{1}\right)\ket{\psi},\mathrm{exp}\left(i\theta_{2}\right)\ket{\phi}\right)\, \tag{S.22},\label{eq:phase in A}
\end{equation}
where $\theta_{1},\theta_{2}$ are arbitrary phases. Because of this
property it suffices to check the condition (\ref{eq:cration single craus})
for vectors of the form 

\begin{gather}
\ket{\psi}=\sqrt{c_{1}}\ket{\chi}+q_{1}\ket{\psi^{\perp}}\, \tag{S.23},\label{eq:vec repr 1}\\
\ket{\phi}=\sqrt{c_{2}}\ket{\chi}+q_{2}\ket{\phi^{\perp}}\,, \tag{S.24} \label{eq: vect repr 2}
\end{gather}
where $\ket{\chi}$ is some fixed vector representative of $\mathbb{P}_{\chi}$,
$q_{i}=\sqrt{1-c_{i}}$, and vectors $\ket{\psi^{\perp}},\,\ket{\phi^{\perp}}$
are normalized and belong to $\mathcal{H}_{\ket{\chi}}^{\perp}$,
the orthogonal complement of $\ket{\chi}$ in $\mathcal{H}$. Using
(\ref{eq:cration single craus}) we obtain the following condition
\begin{widetext}
\begin{gather}
V\ket{\nu}\otimes\left(\sqrt{c_{1}c_{2}}\ket{\chi}\ket{\chi}+\sqrt{c_{1}}q_{2}\ket{\chi}\ket{\phi^{\perp}}+\sqrt{c_{2}}q_{1}\ket{\psi^{\perp}}\ket{\chi}+q_{1}q_{2}\ket{\psi^{\perp}}\ket{\phi^{\perp}}\right)= \nonumber \\
=\tilde{\mathcal{A}}\left(\ket{\psi^{\perp}},\ket{\phi^{\perp}}\right)\cdot\left(\left(\alpha\sqrt{c_{1}}+\beta\sqrt{c_{2}}\right)\ket{\chi}+q_{1}\alpha\ket{\psi^{\perp}}+q_{2}\beta\ket{\phi^{\perp}}\right)\, \tag{S.25},\label{eq:expansion}
\end{gather}
\end{widetext}
where $\tilde{\mathcal{A}}\left(\ket{\psi^{\perp}},\ket{\phi^{\perp}}\right)=\mathcal{A}\left(\ket{\psi},\ket{\phi}\right)$
for $\ket{\psi},\ket{\phi}$ given by (\ref{eq:vec repr 1}) and (\ref{eq: vect repr 2}).
For the fixed normalized vectors $\ket{\psi_{0}^{\perp}},\ket{\phi_{0}^{\perp}}\in\mathcal{H}_{\ket{\chi}}^{\perp}$
the function 

\begin{equation}
\left(\theta_{1},\theta_{2}\right)\rightarrow\tilde{\mathcal{A}}\left(\mathrm{exp}\left(i\theta_{1}\right)\ket{\psi_{0}^{\perp}},\mathrm{exp}\left(i\theta_{2}\right)\ket{\phi_{0}^{\perp}}\right)\tag{S.26} \label{eq:aux map}
\end{equation}
is a smooth function on a torus $\mathbb{S}_{1}\times\mathbb{S}_{1}$.
It follows from the expression
\begin{equation}
\tilde{\mathcal{A}}\left(\mathrm{exp}\left(i\theta_{1}\right)\ket{\psi_{0}^{\perp}},\mathrm{exp}\left(i\theta_{2}\right)\ket{\phi_{0}^{\perp}}\right)=\frac{f\left(\theta_{1},\theta_{2}\right)}{g\left(\theta_{1},\theta_{2}\right)}\, \tag{S.27},\label{eq:another expression}
\end{equation}
where $f$ and $g$ are smooth and $g\neq0$. Equation (\ref{eq:another expression})
follows from the definition of $\tilde{\mathcal{A}}$ and equations
(\ref{eq:second kraus}), (\ref{eq:simplification2}) and (\ref{eq:phase in A}). 

Now, by inserting $\ket{\psi^{\perp}}=\mathrm{exp}\left(i\theta_{1}\right)\ket{\psi_{0}^{\perp}}$
and $\ket{\phi^{\perp}}=\mathrm{exp}\left(i\theta_{2}\right)\ket{\phi_{0}^{\perp}}$
into (\ref{eq:expansion}), we can view expressions appearing on both
sides of equality (\ref{eq:expansion}) as integrable vector-valued
functions of the pair of angles $\left(\theta_{1},\theta_{2}\right)$.
Using the linearity of $V$ and comparing Fourier coefficients on
both sides of (\ref{eq:expansion}) we obtain that for all $\theta_{1},\theta_{2}\in\left[0,2\pi\right)$

\begin{equation}
\tilde{\mathcal{A}}\left(\mathrm{exp}\left(i\theta_{1}\right)\ket{\psi_{0}^{\perp}},\mathrm{exp}\left(i\theta_{2}\right)\ket{\phi_{0}^{\perp}}\right)=\tilde{\mathcal{A}}\left(\ket{\psi_{0}^{\perp}},\ket{\phi_{0}^{\perp}}\right)\,\tag{S.28}.\label{eq:constance on pairs}
\end{equation}
Moreover, we get

\begin{gather}
V\ket{\nu}\ket{\chi}\ket{\chi}=\tilde{\mathcal{A}}\left(\ket{\psi_{0}^{\perp}},\ket{\phi_{0}^{\perp}}\right)\frac{\alpha\sqrt{c_{1}}+\beta\sqrt{c_{2}}}{\sqrt{c_{1}c_{2}}}\ket{\chi},\label{eq:simplistic bilin} \tag{S.29} \\
V\ket{\nu}\ket{\psi^{\perp}}\ket{\phi^{\perp}}=0\,,\label{eq:simpler-1} \tag{S.30} \\
V\ket{\nu}\ket{\psi_{0}^{\perp}}\ket{\chi}=\tilde{\mathcal{A}}\left(\ket{\psi_{0}^{\perp}},\ket{\phi_{0}^{\perp}}\right)\frac{\alpha}{\sqrt{c_{2}}}\ket{\psi_{0}^{\perp}}\,,\label{eq:more educated} \tag{S.31} \\
V\ket{\nu}\ket{\chi}\ket{\phi_{0}^{\perp}}=\tilde{\mathcal{A}}\left(\ket{\psi_{0}^{\perp}},\ket{\phi_{0}^{\perp}}\right)\frac{\beta}{\sqrt{c_{1}}}\ket{\phi_{0}^{\perp}}\, \tag{S.32}.\label{eq:more educated 2}
\end{gather}
Using (\ref{eq:constance on pairs}) and the fact that (\ref{eq:more educated})
and (\ref{eq:more educated 2}) must hold for all normalized $\ket{\psi_{0}^{\perp}},\ket{\phi_{0}^{\perp}}\in\mathcal{H}_{\ket{\chi}}^{\perp}$
we get that $\tilde{\mathcal{A}}\left(\ket{\psi_{0}^{\perp}},\ket{\phi_{0}^{\perp}}\right)=\mathcal{A}=const.$
We complete the proof by noticing that for constant $\tilde{\mathcal{A}}\left(\ket{\psi_{0}^{\perp}},\ket{\phi_{0}^{\perp}}\right)$
the above conditions uniquely specify the action of a linear map $V$
on every vector $\ket{\Phi}\in\mathbb{C}^{2}\otimes\mathcal{H}\otimes\mathcal{H}$.
\end{proof}

A careful analysis of the above proof 
shows that the protocol also works for all input states satisfying $\mathrm{tr}\left(\mathbb{P}_{\chi}\mathbb{P}_{\psi}\right)=\lambda c_{1}\,,\,\mathrm{tr}\left(\mathbb{P}_{\chi}\mathbb{P}_{\phi}\right)=\lambda c_{2}$,
where $\lambda\in\left(0,\frac{1}{\max\left\{ c_{1},c_{2}\right\} }\right]$.
The appropriate superpositions are then generated with probability
$P'_{succ}=\lambda P_{succ}$. The set of possible inputs for which
a given $\Lambda_{sup}$ works is characterized by the condition $\frac{\mathrm{tr}\left(\mathbb{P}_{\chi}\mathbb{P}_{\psi}\right)}{\mathrm{tr}\left(\mathbb{P}_{\chi}\mathbb{P}_{\phi}\right)}=c$.
Combining this with we uniqueness result we get that it is not possible
to probabilistically generate superpositions (\ref{eq:desired super})
for all input states having nonzero overlap with $\mathbb{P}_{\chi}$.

We now present an explicit protocol that generates the superposition
(\ref{eq:desired super}) with the higher probability of success than
the one given in the proof of Theorem 2 but works
for the fixed coeffitients $\alpha,\beta$ satisfying $\left|\alpha\right|^{2}+\left|\beta\right|^{2}=1$.
Let $\tilde{\Lambda}_{sup}\left(\rho\right)=W\rho W^{\dagger}$, for
a linear mapping $W:\mathcal{H}\otimes\mathcal{H}\rightarrow\mathcal{H}$
defined by $W=W_{2}W_{1}$, where 
\begin{gather}
W_{1}=\frac{\alpha}{\sqrt{c_{1}}}\mathbb{I}\otimes\mathbb{I}+\frac{\beta}{\sqrt{c_{2}}}\mathbb{S}\, \tag{S.33},\\
W_{2}=\mathbb{I}\otimes\bra{\chi}\, \tag{S.34}.
\end{gather}
In the above $\mathbb{S}$ denotes the unitary operator that swaps
between two copies of $\mathcal{H}$ , and $\ket{\chi}$ is a vector
representative of $\mathbb{P}_{\chi}$. The action of $V_{2}$ on
simple tensors is given by 

\begin{equation}
\mathbb{I}\otimes\bra{\chi}\left(\ket x\ket y\right)=\ket x\bk{\chi}x\,, \tag{S.35}
\end{equation}
for all $\ket x,\ket y\in\mathcal{H}$. Explicit computation shows
that for vectors $\ket{\psi},\,\ket{\phi}$ which are vector represent
ants of the input states $\mathbb{P}_{\psi},\,\mathbb{P}_{\phi}$
we have 

\begin{equation}
W\ket{\psi}\ket{\phi}=\alpha\frac{\bk{\chi}{\phi}}{\left|\bk{\chi}{\phi}\right|}\ket{\psi}+\beta\frac{\bk{\chi}{\psi}}{\left|\bk{\chi}{\psi}\right|}\ket{\phi}\,. \tag{S.36}
\end{equation}
which shows that $W\mathbb{P}_{\psi}\otimes\mathbb{P}_{\phi}W^{\dagger}\propto\kb{\Psi}{\Psi}$.
The map $\tilde{\Lambda}_{sup}$ is not normalized i.e. it might happen
that it increases the trace. Since $\tilde{\Lambda}_{sup}$ can be
expressed via a single Krauss operator, it is trace non-increasing
if and only if \cite{Nielsen2010} operator $W$ satisfies $W^{\dagger}W\leq\mathbb{I}\otimes\mathbb{I}$.
We have

\begin{align}
W^{\dagger}W & =\frac{\left|\alpha\right|^{2}}{c_{1}}\mathbb{I}\otimes\kb{\chi}{\chi}+\frac{\left|\beta\right|^{2}}{c_{2}}\kb{\chi}{\chi}\otimes\mathbb{I} + \nonumber \\ & +\frac{\alpha\bar{\beta}}{\sqrt{c_{1}c_{2}}}\mathbb{I}\otimes\kb{\chi}{\chi}\mathbb{S}+\mathbb{S}\frac{\bar{\alpha}\beta}{\sqrt{c_{1}c_{2}}}\mathbb{I}\otimes\kb{\chi}{\chi}\,. \tag{S.37}\label{eq:explicit} 
\end{align}
Explicit computation shows that the maximal eigenvalue of $W^{\dagger}W$
is given by 

\begin{equation}
\lambda_{\max}\left(W^{\dagger}W\right)=\max\left\{ \left|\frac{\alpha}{\sqrt{c_{1}}}+\frac{\beta}{\sqrt{c_{2}}}\right|^{2},\frac{\left|\alpha\right|^{2}}{c_{1}}+\frac{\left|\beta\right|^{2}}{c_{2}}\right\} \,.\label{eq:miximal eigenvalue} \tag{S.38}
\end{equation}
The largest possible $s\in\mathbb{R}_{+}$ such that $s\cdot\Lambda_{sup}$
is trace non-increasing is $s_{max}=\left[\lambda_{\max}\left(W^{\dagger}W\right)\right]^{-1}$.
The probability that $s_{max}\Lambda_{sup}$ will produce the superposition
$\mathbb{P}_{\Psi}$ is given by 
\begin{equation}
\tilde{P}_{succ}=s_{max}\cdot\mathrm{tr}\left[\tilde{\Lambda}_{sup}\left(\mathbb{P}_{\psi}\otimes\mathbb{P}_{\phi}\right)\right]=\frac{\mathcal{N}_{\Psi}^{2}}{\lambda_{\max}\left(W^{\dagger}W\right)}\,,\label{eq:probability of succ 1} \tag{S.39}
\end{equation}
Comparing (\ref{eq:probability of succ 1}) with (15)
and using (\ref{eq:miximal eigenvalue}) we see that $\tilde{P}_{succ}\geq P_{succ}$
if an only if 

\begin{equation}
\frac{1}{c_{1}}+\frac{1}{c_{2}}\geq\max\left\{ \left|\frac{\alpha}{\sqrt{c_{1}}}+\frac{\beta}{\sqrt{c_{2}}}\right|^{2},\frac{\left|\alpha\right|^{2}}{c_{1}}+\frac{\left|\beta\right|^{2}}{c_{2}}\right\} \,,\label{eq:comparison} \tag{S.40}
\end{equation}
for $c_{1,2}\in\left(0,1\right]$ and $\left|\alpha\right|^{2}+\left|\beta\right|^{2}=1$.
Inequality (\ref{eq:comparison}) can be easily checked via elementary
means. Just like in the case of $\Lambda_{sup}$ it is possible to
show that $\tilde{\Lambda}_{sup}$ is defined uniquelly up to scalling.
Consequently, the probability of success $\tilde{P}_{succ}$ given
in (\ref{eq:probability of succ 1}) is the highest possible one,
provided the coeffitients $\alpha,\beta$ are fixed.

\subsection*{Part C: Constructive protocol for superposition of multiple states}

In this part we generalize the protocol presented in the proof of
Theorem 2 to the case of multiple superpositions.

First, we generalize the formula (6).
Assume that we are given a known pure stet $\mathbb{P}_{\chi}$ and
$d$ unknown pure states $\mathbb{P}_{\psi_{1}},\ldots,\mathbb{P}_{\psi_{d}}$
satisfying 

\begin{equation}
\mathrm{tr}\left(\mathbb{P}_{\chi}\mathbb{P}_{\psi_{i}}\right)\neq0\,,\, i=1,\ldots,d\,,\label{eq:multiple condition} \tag{S.41}
\end{equation}
We now introduce the mapping $\left(\mathbb{P}_{\psi_{1}},\ldots,\mathbb{P}_{\psi_{d}}\right)\rightarrow\kb{\Psi}{\Psi}$
that would associate to any sequence of such pure states their superposition.
Let $\ket{\chi}$ be a vector representative of $\mathbb{P}_{\chi}$
and let $\ket{\psi_{i}}$, , be vector representatives
of states $\mathbb{P}_{\psi_{i}}$, $i=1,\ldots,d$ . For a given sequence of complex
coefficients $\alpha_{1},\ldots,\alpha_{d}$ (satisfying $\sum_{i=1}^{d}\left|\alpha_{i}\right|^{2}=1$)
we set

\begin{equation}
\ket{\Psi_{d}}=\sum_{i=1}^{d}\alpha_{i}\frac{\prod_{k\neq i}\bk{\chi}{\psi_{k}}}{\prod_{k\neq i}\left|\bk{\chi}{\psi_{k}}\right|}\ket{\psi_{i}}\,,\label{eq:multiple superposition} \tag{S.42} \ .
\end{equation}
The above formula is a direct
generalization of (6) and analogous
arguments show that $\kb{\Psi}{\Psi}$ is a well-defined function
of states $\mathbb{P}_{\psi_{1}},\ldots,\mathbb{P}_{\psi_{d}}$. This
can be also seen from the explicit formula 

\begin{align}
\kb{\Psi_d}{\Psi_d} & =\sum_{i=1}^{d}\left|\alpha_{i}\right|^{2}\mathbb{P}_{\psi_{i}}+\nonumber  \\
 & +\sum_{i,j:i\neq j}\frac{\alpha_i \bar{\alpha}_j\mathbb{P}_{\psi_{i}}\mathbb{M}_{ij}\mathbb{P}_{\psi_{j}}}{\sqrt{\prod_{k\neq i}\mathrm{tr}\left(\mathbb{P}_{\chi}\mathbb{P}_{\psi_{k}}\right)}\sqrt{\prod_{l\neq j}\mathrm{tr}\left(\mathbb{P}_{\chi}\mathbb{P}_{\psi_{l}}\right)}}\,,\label{eq:full formula multiply} \tag{S.43}
\end{align}
where 

\begin{equation}
\mathbb{M}_{ij}=\mathbb{P}_{\chi}\prod_{k:k\neq i,k\neq j}\left(\mathbb{P}_{\chi}\mathbb{P}_{\psi_{k}}\mathbb{P}_{\chi}\right)\, \ ,\label{eq:auxilary express} \tag{S.44}
\end{equation}
where it is understood that the product of operators indexed by the empty set  is the identity operator.
\begin{remark}
In the context of geometric approaches to quantum mechanics \cite{Chaturvedi2013,Ercolessi2010,Manko2002}
there appears the following superposition rule,

\begin{equation} 
\ket{\Psi_d'}=\sum_{i=1}^{d}\alpha_{i}\frac{\bk{\psi_{i}}{\chi}}{\left|\bk{\psi_{i}}{\chi}\right|}\ket{\psi_{i}}\,.\label{eq:alternativesuperp} \tag{S.45}
\end{equation}
For $d>2$ the mappings 
\[
\left(\mathbb{P}_{\psi_{1}},\ldots,\mathbb{P}_{\psi_{d}}\right)\rightarrow\kb{\Psi_d}{\Psi_d}\,,\,\left(\mathbb{P}_{\psi_{1}},\ldots,\mathbb{P}_{\psi_{d}}\right)\rightarrow\kb{\Psi_d'}{\Psi'_d} \ , \tag{S.46}
\]
differ from each other. It would be interesting to explore the geometric
significance of the superposition  rule used by us. Also, It would be interesting to explore if states of the form \eqref{eq:alternativesuperp} can be generated by processes allowed by quantum mechanics \end{remark}
\begin{thm}
\label{protoco qdit} Let $\mathbb{P}_{\chi}$ be a fixed pure state
on Hilbert space $\mathcal{H}$ and let $d>1$ be a natural number.
There exist a CP map $\Lambda_{sup}^{d}\in\mathcal{CP}\left(\mathbb{C}^{d}\otimes\mathcal{H}^{\otimes d},\mathcal{H}\right)$
such that for all pure states $\mathbb{P}_{\psi_{i}}\,,\, i=1,\ldots,d$,
on $\mathcal{H}$ satisfying 
\begin{equation}
\mathrm{tr}\left(\mathbb{P}_{\chi}\mathbb{P}_{\psi_{i}}\right)=c_{i}\,,\, i=1,\ldots,d\,,\label{eq:conditions multi} \tag{S.47}
\end{equation}
 we have
\begin{equation}
\Lambda_{sup}^{d}\left(\mathbb{P}_{\nu}\otimes\bigotimes_{i=1}^{d}\mathbb{P}_{\psi_{i}}\right)\propto\kb{\Psi_{d}}{\Psi_{d}}\,,\label{eq:condition superp multi} \tag{S.48}
\end{equation}
where
\begin{equation}
\mathbb{P}_{\nu}\,,\,\ket{\nu}=\sum_{i=1}^{d}\alpha_{i}\ket i,\label{eq:qdit unknown} \tag{S.49}
\end{equation}
is an unknown qdit state and the vector $\ket{\Psi_{d}}$ is given
by (\ref{eq:multiple superposition}). Moreover, a CP map
$\Lambda_{sup}^{d}$  realising (\ref{eq:condition superp multi}) is unique up scaling. \end{thm}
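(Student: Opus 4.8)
The plan is to mirror the two ingredients of the proof of Theorem 2: first exhibit an explicit circuit realising (\ref{eq:condition superp multi}), then establish uniqueness by the same extreme-ray plus Fourier argument used for $d=2$.

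For the existence part, I would set $\Lambda_{sup}^{d}=\Lambda_{4}\circ\Lambda_{3}\circ\Lambda_{2}\circ\Lambda_{1}$ in direct analogy with the $d=2$ protocol, replacing the controlled swap by a controlled permutation. Concretely, let $U_{i}$ be the unitary on $\mathcal{H}^{\otimes d}$ that transposes the first and the $i$-th copy (with $U_{1}=\mathbb{I}$), and set $V_{1}=\sum_{i=1}^{d}\kb i i\otimes U_{i}$, so that conditioned on the control register being $\ket i$ the $i$-th input is routed to the first output slot. Then let $V_{2}=\mathbb{I}\otimes\mathbb{I}\otimes(\kb{\chi}{\chi})^{\otimes(d-1)}$ project the remaining $d-1$ copies onto $\ket{\chi}$, let $V_{3}=\mathbb{P}_{\mu}\otimes\mathbb{I}$ with $\ket{\mu}=\mathcal{C}\sum_{i=1}^{d}\sqrt{c_{i}}\ket i$ (the natural generalisation of $\ket{\mu}$ from the proof of Theorem 2), and let $\Lambda_{4}$ trace out the control register and the last $d-1$ copies. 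A direct computation then yields, in the $i$-th term, the scalar $\alpha_{i}\bk{\mu}i\prod_{k\neq i}\bk{\chi}{\psi_{k}}$ multiplying $\ket{\psi_{i}}$; since $\bk{\mu}i\propto\sqrt{c_{i}}$ and $\sqrt{c_{i}}\prod_{k\neq i}\sqrt{c_{k}}=\sqrt{\prod_{k}c_{k}}$ is independent of $i$, the output is proportional to $\sum_{i}\alpha_{i}\frac{\prod_{k\neq i}\bk{\chi}{\psi_{k}}}{\prod_{k\neq i}\left|\bk{\chi}{\psi_{k}}\right|}\ket{\psi_{i}}=\ket{\Psi_{d}}$, establishing (\ref{eq:condition superp multi}). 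Complete positivity and trace non-increase follow exactly as in the $d=2$ case.

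For uniqueness I would follow the structure of the proof of Theorem 2 verbatim. The extreme-ray argument reduces the problem to a single Kraus operator $V:\mathbb{C}^{d}\otimes\mathcal{H}^{\otimes d}\rightarrow\mathcal{H}$ obeying $V\ket{\nu}\ket{\psi_{1}}\cdots\ket{\psi_{d}}=\mathcal{A}(\cdots)\ket{\Psi_{d}}$ on all admissible inputs; linearity in $\ket{\nu}$ removes the dependence of $\mathcal{A}$ on $\alpha_{1},\ldots,\alpha_{d}$, and the global-phase invariance of the left-hand side shows $\mathcal{A}$ depends only on the states $\mathbb{P}_{\psi_{i}}$. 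Parametrising $\ket{\psi_{i}}=\sqrt{c_{i}}\ket{\chi}+q_{i}\,\mathrm{exp}(i\theta_{i})\ket{\psi_{i,0}^{\perp}}$ with $q_{i}=\sqrt{1-c_{i}}$ and $\ket{\psi_{i,0}^{\perp}}\in\mathcal{H}_{\ket{\chi}}^{\perp}$ fixed, I would expand the input tensor product into its $2^{d}$ terms and view both sides of the resulting identity as smooth vector-valued functions on the $d$-torus $(\theta_{1},\ldots,\theta_{d})$. Comparing Fourier coefficients then forces the proportionality function $\tilde{\mathcal{A}}$ to be constant and, crucially, forces $V$ to annihilate every product vector having two or more factors in $\mathcal{H}_{\ket{\chi}}^{\perp}$. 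The remaining equations fix the action of $V$ on all product vectors with at most one perpendicular factor, and since these together with the annihilated vectors span $\mathbb{C}^{d}\otimes\mathcal{H}^{\otimes d}$, the map $V$ --- and hence $\Lambda_{sup}^{d}$ --- is determined up to a single overall scalar.

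The main obstacle will be the bookkeeping in the Fourier step: with $d$ phases the input expands into $2^{d}$ monomials $\mathrm{exp}(i\sum_{j\in S}\theta_{j})$ indexed by subsets $S$, whereas the right-hand side $\tilde{\mathcal{A}}\cdot\ket{\Psi_{d}}$ carries only the frequencies $0$ and $\mathrm{exp}(i\theta_{i})$ in its vector part. Matching the modes of weight $\geq 2$ is precisely what yields the vanishing of $V$ on the multiply-perpendicular sector, and one must verify that the smoothness argument (writing $\tilde{\mathcal{A}}$ as a ratio $f/g$ with $g$ nowhere zero, as in (S.27)) legitimately justifies term-by-term Fourier comparison on the higher-dimensional torus. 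Once the weight-$\geq 2$ modes are handled, the weight-$0$ and weight-$1$ equations reduce to the same finite system as (S.29)--(S.32), and the argument closes exactly as for $d=2$.
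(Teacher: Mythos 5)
Your proposal is correct and follows essentially the same route as the paper: the same four-step circuit with a controlled permutation $V_{1}=\sum_{i}\kb ii\otimes\mathbb{S}_{1,i}$, projection of the remaining $d-1$ registers onto $\ket{\chi}$, projection of the control onto an auxiliary vector $\ket{\mu}$, and a partial trace, followed by a uniqueness argument that transplants the $d=2$ Kraus-operator and Fourier analysis to the $d$-torus (the paper merely asserts this step is ``analogous''). Note that your $\ket{\mu}\propto\sum_{i}\sqrt{c_{i}}\ket i$ is the same state as the paper's $\ket{\mu_{d}}\propto\sum_{i}\bigl(\prod_{k\neq i}\sqrt{c_{k}}\bigr)^{-1}\ket i$, since the two coefficient sequences differ by the $i$-independent factor $\sqrt{\prod_{k}c_{k}}$.
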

\begin{proof}
We present an explicit protocol that realizes (\ref{eq:condition superp multi})
which is analogous to the one given in the proof of Theorem 2.
Let us first define and auxiliary normalized qdit vector 
\begin{equation}
\ket{\mu_{d}}=\mathcal{C}_{d}\sum_{i=1}^{d}\frac{1}{\prod_{k\neq i}\sqrt{c_{i}}}\ket i\,, \tag{S.50}
\end{equation}
where $\mathcal{C}_{d}$ is a normalization constant.
We set $\Lambda_{sup}^{d}=\Lambda_{4}\circ\Lambda_{3}\circ\Lambda_{2}\circ\Lambda_{1}$,
where 
\begin{gather}
\Lambda_{1}\left(\rho\right) = V_{1}\rho V_{1}^{\dagger}\,,\, V_{1}=\sum_{i=1}^{d}\kb ii\otimes\mathbb{S}_{1,i}\, , \tag{S.51}\\
\Lambda_{2}\left(\rho\right) = V_{2}\rho V_{2}^{\dagger}\,,\, V_{2}=\mathbb{I}_{d}\otimes\mathbb{I}\otimes\left(\kb{\chi}{\chi}^{\otimes d-1}\right)\, \tag{S.52},\\
\Lambda_{3}\left(\rho\right) = V_{3}\rho V_{3}\,,\, V_{3}=\mathbb{P}_{\mu_{d}}\otimes\mathbb{I}^{\otimes d}\, \tag{S.53},\\
\Lambda_{4}\left(\rho\right) = \mathrm{tr}_{1,3,\ldots,d+1}\left(\rho\right) \tag{S.54}\,.
\end{gather}
In the above $\mathbb{S}_{1,i}:\mathcal{H}^{\otimes d}\rightarrow\mathcal{H}^{\otimes d}$
denotes the unitary operator that swaps between first and i'th copy
of the Hilbert space$\mathcal{H}^{\otimes d}$, $\mathbb{I}_{d}$
and $\mathbb{I}$ are identity operators on $\mathbb{C}^{d}$ and
$\mathcal{H}$ respectively, $\mathrm{tr}_{1,3,\ldots,d+1}\left(\rho\right)$
is the partial trace over all except for the second factor in the
tensor product $\mathbb{C}^{d}\otimes\mathcal{H}\otimes\mathcal{H}^{\otimes d-1}$.
Operation $\Lambda_{sup}^{d}$ is manifestly completely positive and
trace non-increasing. Direct calculation shows that under the assumed
conditions (\ref{eq:condition superp multi}) indeed holds. 
The proof of uniqueness of the map $\Lambda_{sup}^{d}$ is analogous to the one given in the case $d=2$ (covered by Theorem 2).
\end{proof}

\subsection*{Part D: Creation of superpositions of results of subroutines of quantum computations run in parallel}

Here we show how to implement the protocol coherently  superposing results of subroutines of quantum computation in the standard quantum circuit formalism. Assume we want to superpose states $\mathbb{P}_{\psi}$, $\mathbb{P}_{\phi}$ that correspond to application of some quantum circuts on $N$ qbits (the Hilbert space of the system $\mathcal{H}=\left(\mathbb{C}^2\right)^{\otimes N}$) , 
\begin{equation}
\mathbb{P}_{\psi}=U\kb{x}{x}U^{\dagger}\ , \ \mathbb{P}_{\phi}=V\kb{y}{y}V^{\dagger}\ , \tag{S.55}
\end{equation} 
where $\kb{x}{x}$, $\kb{y}{y}$ are classical states encoding the cinput to the the computation,
 \begin{equation} \label{input}
\ket{x}=\ket{x_1}\otimes \ldots \otimes \ket{x_N} \ , \ket{y}=\ket{y_1}\otimes \ldots \otimes \ket{y_N} \ , \tag{S.55}
\end{equation}
and $U,V$ are unitary operators on $\left(\mathbb{C}^2\right)^{\otimes N}$.  We introduce the auxiliary qubit that will allow us to encode the state $\mathbb{P}_\chi$ without altering the computation (from now on we will consider the Hilbert space $\tilde{\mathcal{H}}=\mathbb{C}^2 \otimes \mathcal{H}$). We set $\ket{\chi}=\ket{0}^{\otimes N+1}$ and we introduce new initial states as projectors onto vectors
\begin{gather}
\ket{\tilde{x}}=\frac{1}{\sqrt{2}}\left(\ket{\chi}+\ket{1}\ket{x}\right) \ , \nonumber \\
\ket{\tilde{y}}=\frac{1}{\sqrt{2}}\left(\ket{\chi}+\ket{1}\ket{y}\right) \ . \label{eq:encoding1} \tag{S.56}
\end{gather}
By applying the controlled versions of the unitaries $U,V$,
\begin{equation} \label{eq:encunit}
\mathcal{C}\left({U}\right)=\kb{0}{0}\otimes \mathbb{I}+\kb{1}{1}\otimes U \ , \mathcal{C}\left({V}\right)=\kb{1}{1}\otimes \mathbb{I}+\kb{1}{1}\otimes V \ ,  \tag{S.57}
\end{equation}
we obtain the states represented by the vectors 
\begin{gather}
\ket{\tilde{\psi}}=\frac{1}{\sqrt{2}}\left(\ket{\chi}+\ket{1}U\ket{x}\right) \ , \nonumber \\
\ket{\tilde{\phi}}=\frac{1}{\sqrt{2}}\left(\ket{\chi}+\ket{1}V\ket{y}\right) \ . \label{eq:encodres} \tag{S.58}
\end{gather}
These vectors are exactly of the form given by (16) and thus we can apply to them protocol from the main text of the manuscript (keep in mind that we set $\ket{\chi}=\ket{0}^{\otimes N+1}$) that with the probability 
\begin{equation}
P_{succ}=\frac{1}{4}\left(1+\frac{1}{4}\left\|U\ket{x}+V\ket{y}\right\|^2\right) \ \tag{S.59} ,
\end{equation}
will produce a state having a vector representative
\begin{equation} \label{eq:compsuperpos}
\ket{\Psi}=\frac{1}{\sqrt{P_{succ}}}(\ket{0}^{\otimes N+1}+\frac{1}{2}\ket{1}\left[U\ket{x}+V\ket{y}] \right) \ . \tag{S.60}
\end{equation}
Note that from the state \eqref{eq:compsuperpos} it is possible to extract (by postselecting with respect to obtaining the result "1" in the auxiliary qbit) the state encoding the superposition $U\ket{x}+V\ket{y}$ in the computational register.

Let us discuss the method presented above. First of all, in order to be able to create the desired superpositions we need to encode input states in the extended space (see \eqref{eq:encoding1}) and use the controlled versions of the gates $U,V$ (see \eqref{eq:encunit}). The controlled versions of the unitary gate are defined up to a phase standing next to the  unitary which is controlled  \cite{Bisio2015} and therefore we can always add an additional phase in front of the second terms in the sums \eqref{eq:encodres}. However, this is not a problem as we can always decode (probabilistically)  the original computation from states of the form \eqref{eq:encodres}. Another possible problem may come form the necessity of implementing the controlled versions of gates $U,V$. This can be always done if one knows the classical description of these gates. In particular, assume that the $N$ qbit gate $U$ can be decomposed as a sequence of basic gates $U_1,U_2,\ldots,U_k$ ,
\begin{equation}
U=U_k \circ \ldots U_2 \circ U_1 \tag{S.61} \ .
\end{equation}
Then, a controlled version of of this gate can be obtained by composing controlled versions of the basic gates,
\begin{equation} \label{eq:cotrcomp}
\mathcal{C}\left({U}\right)=\mathcal{C}\left({U_k}\right) \circ \ldots \mathcal{C}\left({U_2}\right) \circ \mathcal{C}\left({U_1}\right) \tag{S.62} \ .
\end{equation}
For the graphical illustration of the preperation of the state $\mathbb{P}_{\tilde{\psi}}$  see Figure \ref{fig:circuit}.

\begin{figure}[h]
\scalebox{0.5}{\includegraphics[trim= 0cm 0cm 6cm 0cm, clip=true]{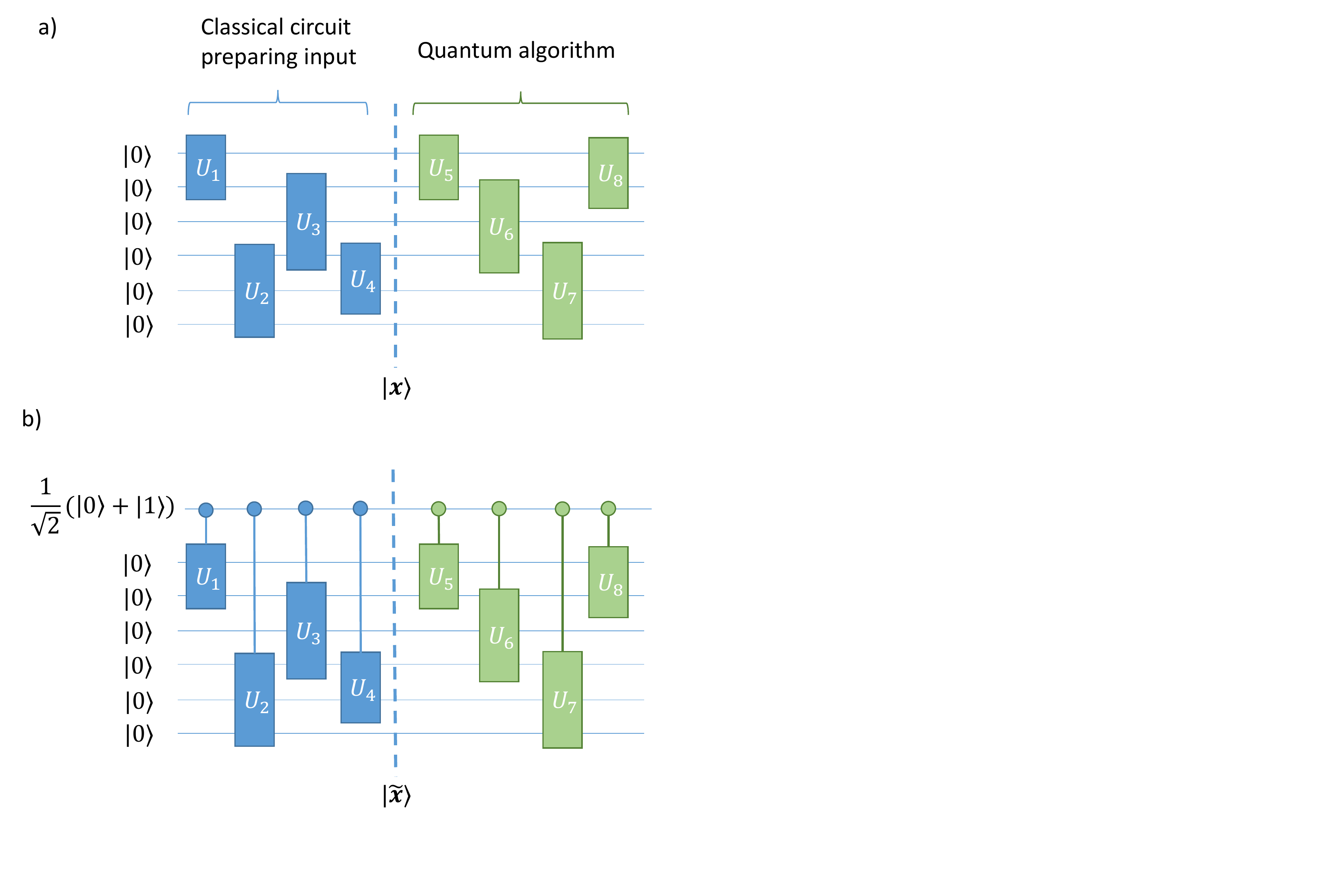}}
\centering
  \caption{Computing in the subspace perpendicular to $\ket{chi}$. (a) original circuit, including preparation of classical input $\ket{x}$ 
and the quantum algorithm producing $U\ket{x}$. (b) the new circuit uses previous gates controlled by ancillary qubit. 
The blue part of the circuit prepares $\ket{\tilde{x}}=\frac{1}{\sqrt2}\left(\ket{0}\ket{0}^{\otimes N}+|1\>\ket{x} \right)$. 
The green part runs computing on $\ket{x}$ resulting in 
$\ket{\tilde{\psi}}=\frac{1}{\sqrt2}\left(\ket{0}\ket{0}^{\otimes N}+\ket{1}U\ket{x} \right)$
The reference vector is $\chi=\ket{0}^{\otimes N+1}$. }
\label{fig:circuit}
\end{figure}

To sum up, the protocol presented above creates superpositions of results of subroutines of quantum computations run in parallel with probability $P_{succ}\geq\frac{1}{4}$. In order to implement the method one has to know what quantum subroutines are implemented. However, the inputs of the computations can be arbitrary (there are no constrains on the classical input states $\kb{x}{x}$ and  $\kb{y}{y}$).

\end{document}